\newcommand\mymatrixbraceoffseth{0.5em}
\newcommand*\mymatrixbraceleft[4][m]{
	\draw[mymatrixbrace] ($(#1.north east)!(#1-#2-1.north east)!(#1.south east)+(\mymatrixbraceoffseth,0)$)
	-- node[right=2pt] {#4} 
	($(#1.north east)!(#1-#3-1.south east)!(#1.south east)+(\mymatrixbraceoffseth,0)$);
}
\theoremstyle{plain}
\newtheorem{theorem}{Theorem}
\newtheorem{lemma}[theorem]{Lemma}
\newtheorem{proposition}[theorem]{Proposition}
\theoremstyle{definition}
\newtheorem{definition}[theorem]{Definition}
\newtheorem{example}{Example}
\newtheorem{remark}[example]{Remark}
\newcommand{\Lref}[1]{Lem\-ma\,\ref{#1}}
\newcommand{\Cref}[1]{Co\-ro\-lla\-ry\,\ref{#1}}
\newcommand{\Dref}[1]{Def\i\-ni\-tion\,\ref{#1}}
\newcommand{\Sref}[1]{Sec\-tion\,\ref{#1}}
\newcommand{\Tbref}[1]{Tab\-le\,\ref{#1}}
\newcommand{\Fref}[1]{Fig\-ure\,\ref{#1}}
\newcommand{\XX}{\mathbb{X}}
\DeclareMathAlphabet{\mathbfsl}{OT1}{ppl}{b}{it} 
\newcommand{\vr}{\mathbfsl{r}}
\newcommand{\va}{\mathbfsl{a}}
\newcommand{\vb}{\mathbfsl{b}}
\newcommand{\vd}{\mathbfsl{d}}
\newcommand{\vm}{\mathbfsl{m}}
\newcommand{\vp}{\mathbfsl{p}}
\newcommand{\vx}{\mathbfsl{x}}
\newcommand{\vzero}{\mathbf{0}}
\newcommand{\vc}{\mathbfsl{c}}
\newcommand{\vA}{\mathbfsl{A}}
\newcommand{\vB}{\mathbfsl{B}}
\newcommand{\vC}{\mathbfsl{C}}
\newcommand{\vD}{\mathbfsl{D}}
\newcommand{\cS}{\mathcal{S}}
\newcommand{\cC}{\mathcal{C}}
\newcommand{\cE}{\mathcal{E}}
\newcommand{\cF}{\mathcal{F}}
\newcommand{\cX}{\mathcal{X}}
\newcommand{\cP}{\mathcal{P}}
\newcommand{\cT}{\mathcal{T}}
\newcommand{\sS}{\mathbb{S}}
\newcommand{\vdelta}{{\pmb{\delta}}}
\newcommand{\vsigma}{{\pmb{\sigma}}}
\newcommand{\vtau}{{\pmb{\tau}}}
\newcommand{\vrho}{{\pmb{\rho}}}
\newcommand{\vlambda}{{\pmb{\lambda}}}
\newcommand{\vmu}{{\pmb{\mu}}}
\renewcommand{\le}{\leqslant}
\renewcommand{\ge}{\geqslant}
\newcommand{\be}[1]{\begin{equation}\label{#1}}
	\newcommand{\ee}{\end{equation}}
\newcommand{\eq}[1]{(\ref{#1})}
\newcommand{\per}{{\rm per}}
\newcommand{\prob}[1]{{\rm Prob}{\left(#1\right)}}
\newcommand{\floor}[1]{{\left\lfloor #1\right\rfloor}}
\newcommand{\ceil}[1]{{\left\lceil #1\right\rceil}}
\newcommand{\Strut}[2]{\rule[-#2]{0cm}{#1}}
\gdef\@punct{.\ \ }  
\def\@sect#1#2#3#4#5#6[#7]#8{%
	\ifnum #2>\c@secnumdepth
	\def\@svsec{}
	\else
	\refstepcounter{#1}\edef\@svsec{%
		\ifnum #2>0{{\csname the#1\endcsname}}.\fi%
		\hskip .5em}
	\fi
	\@tempskipa #5\relax
	\ifdim \@tempskipa>\z@
	\begingroup #6\relax
	\@hangfrom{\hskip #3\relax\@svsec}{\interlinepenalty \@M #8\par}
	\endgroup
	\csname #1mark\endcsname{#7}
	\addcontentsline{toc}{#1}{\ifnum #2>\c@secnumdepth\else
		\protect\numberline{\csname the#1\endcsname}\fi#7}
	\else
	\def\@svsechd{#6\hskip #3\@svsec #8\@punct\csname #1mark\endcsname{#7}
		\addcontentsline{toc}{#1}{\ifnum #2>\c@secnumdepth \else
			\protect\numberline{\csname the#1\endcsname}\fi#7}}
	\fi
	\@xsect{#5}}
\def\@ssect#1#2#3#4#5{\@tempskipa #3\relax
	\ifdim \@tempskipa>\z@
	\begingroup #4\@hangfrom{\hskip #1}{\interlinepenalty \@M #5\par}\endgroup
	\else \def\@svsechd{#4\hskip #1\relax #5\@punct}\fi
	\@xsect{#3}}
\title{\huge Computing Permanents on a Trellis}
\author[1]{\large Han Mao Kiah}
\affil[1]{\small School of Physical and Mathematical Sciences, Nanyang Technological University, Singapore}
\author[2]{\large Alexander Vardy}
\author[2]{\large Hanwen Yao}
\affil[2]{\small Department of Electrical \& Computer Engineering, University of California San Diego, LA Jolla, CA, USA}
\begin{document}
\date{\today}

\maketitle
\thispagestyle{empty}

\begin{abstract}
\noindent
Computing the permanent of a matrix is 
a classical problem that attracted considerable interest 
since the work of Ryser (1963) and Valiant (1979). 
A trellis $T$ is an edge-labeled directed graph with~the~property 
that every vertex in $T$ has a well-defined depth; trellises were
extensively studied in coding theory since the 1960s. 
In this work, we establish a connection between the two domains.
We introduce the canonical trellis $\cT_n$, which represents the set
of permutations in the symmetric group $\sS_n$, and show that
the permanent of an arbitrary $n \times n$ matrix $\vA$ can be computed
as a flow on this trellis.
Under appropriate normalization, such trellis-based computation
invokes slightly \emph{less} additions and multiplications than the 
currently \emph{best known} methods for exact computation of the permanent.
Moreover, if the matrix $\vA$ has structure, the canonical trellis $\cT_n$
may become amenable to vertex merging, thereby significantly reducing
the complexity of the computation. Herein, we consider the following
special cases. 
\begin{description}
	\item[Repeated rows:]
	Suppose $\vA$ has only $t<n$ distinct rows, where $t$ is a constant.
	This case is of importance in boson sampling. The best known method 
	to compute $\per(A)$ in this case, due~to~Clifford and Clifford (2020),
	has running time $O(n^{t+1})$. Merging vertices in $\cT_n$, we obtain 
	a reduced trellis that improves upon this result of Clifford and Clifford 
	by a factor of about $n/t$.
	\item[Order statistics:]
	Using trellises, we compute the joint distribution of $t$ order statistics 
	of $n$ independent, but not identically distributed, random variables
	$X_1,\ldots,X_n$ in time $O(n^{t+1})$.\ Previously,~poly\-nomial-time
	methods were known only for the case where $X_1,\ldots,X_n$ are drawn
	from at most two non-identical distributions. For this case,
	we reduce the time complexity from $O(n^{2t})$ to $O(n^{t+1})$.
	\item[Sparse matrices:] 
	Suppose that 
	each entry in $\vA$ is nonzero with probability $d/n$, where $d$ is constant.
	We show that in this case, the canonical trellis $\cT_n$ can be pruned to 
	exponentially fewer vertices. The resulting running time is $O(\phi^n)$,
	where $\phi$ is a constant strictly less than $2$.
	\item[TSP distances:]  
	Intersecting the canonical trellis $\cT_n$ with another trellis that
	represents walks on a complete graph, we obtain a trellis that
	represents circular permutations. Using the latter trellis to solve
	the traveling salesperson problem recovers the well-known Held-Karp
	algorithm.
\end{description}
Notably, in all these cases, the reduced trellis can be obtained using
the standard vertex-merging procedure 
that is well known in the theory of trellises.
We expect that this merging procedure 
and other results from trellis theory
can be applied to many more 
structured matrices of interest.
\end{abstract}

\newpage

\section{Introduction}
\label{sec:intro}

Consider an $n\times n$ matrix 
\smash{$\vA=\big(a_{ij}\big)_{1\le i, j\le n}$}
over a field. The \emph{permanent of $\vA$}, denoted $\per(\vA)$, is defined
by the following expression:\vspace{-0.54ex}
\begin{equation}
	\label{permanent-def}
	\per(\vA)
	\, \triangleq 
	\sum_{\vsigma\in\sS_n} \prod_{i=1}^n a_{i\sigma_i}
\end{equation}
where $\sS_n$ denotes the set of all permutations of
$[n]\triangleq \{1,2,\ldots, n\}$. 
Permanents have numerous applications in combinatorial enumeration,
discrete mathematics, and statistical physics. Recently, the study 
of permanent computation attracted much renewed interest due to its 
connection to boson sampling and the demonstration of the so-called 
quantum supremacy \cite{Aaronson.2011,Clifford.2018,Clifford.2020}.

\looseness=-1
It is well known that exact evaluation of the permanent is computationally 
intractable. Specifically,~Vali\-ant \cite{Valiant.1979} demonstrated in 
1979 that computing the permanent of a $\{0,1\}$-matrix is 
{\large$\#{\tt P}$}-complete.
It is thus as difficult as any problem in {\large${\tt NP}$}.
Indeed, it is this intractability that led 
Aaronson and Arkhipov \cite{Aaronson.2011} to propose boson sampling 
as an attainable experimental demonstration of quantum advantage.
In general, both exact and approximate computation of permanents 
have been an active area of research since at least 1979.

In this work, we focus on {\em exact} methods for computing the permanent.
Such methods are inherently exponential-time, at least for unstructured matrices.
Straightforward evaluation of the expression in \eq{permanent-def}~requires
$n!n$ arithmetic operations. This was improved upon by Ryser~\cite{Ryser.1963}
in 1963, who showed that\vspace{-0.54ex}
\be{Ryser}
\per(\vA) 
\, = \,
(-1)^n \hspace{-0.54ex}
\sum_{S \subseteq [n]} \!(-1)^{|S|} \prod_{j=1}^n \sum_{i \in S}a_{ij}
\vspace{-0.54ex}
\ee
where the outer sum is over all nonempty subsets $S$ of $[n]$.
Ryser's formula \eq{Ryser} is based on the principle of inclusion-exclusion,
and its evaluation involves $O(n^22^n)$ arithmetic operations in Ryser's
original analysis~\cite{Ryser.1963}.
Later, using Gray codes to represent the elements of $S \subseteq [n]$, 
Nijenhuis and Wilf \cite{NW.1978} reduced this
to $O(n2^{n-1})$. In 2010, Glynn \cite{Glynn.2010} gave an alternative
formula for computing~the~permanent, namely:
\be{Glynn}
\per(\vA) 
\, = \,
\frac{1}{2^{n-1}}
\left[
\sum_{\vdelta} \biggl(\prod_{k=1}^n \delta_k\!\biggr)
\prod_{j=1}^n \sum_{i=1}^{n} \delta_i a_{ij}
\right]
\ee
where the outer sum is over all $2^{n-1}$ vectors
$
\vdelta =
(\delta_{1},\delta_{2},\ldots,\delta_{n}) \,{\in}\, \{+1,-1\bigr\}^{n}
$
with $\delta_{1}={+}1$.
Using Gray codes, the complexity of evaluating the Glynn formula \eq{Glynn}
is also $O(n2^{n-1})$.
For structured matrices, 
the com\-plexity can be further reduced. For example, suppose that
$A$ has only $t<n$ distinct rows, where $t$ is a~constant, as is the
case in boson sampling. For this case, it was shown by 
Clifford and Clifford \cite{Clifford.2018,Clifford.2020} that the 
permanent can be computed in $O(n^{t+1})$ time.
Another example of useful structure is sparsity.
Here,~under various sparsity assumptions, several papers
\cite{Servedio.2005, Bjorklund.2012, Lundow.2020}
showed that the permanent can be computed in time 
$O\bigl(n^2 (2-\epsilon)^n\bigr)$, where $\epsilon$ 
is some positive constant.
While the methods of
\cite{Servedio.2005, Bjorklund.2012, Lundow.2020, Clifford.2018,Clifford.2020}
and other papers are considerably faster for structured matrices, 
a key common ingredient 
in all of them is still
either the Ryser formula \eq{Ryser}
or the Glynn formula \eq{Glynn}, or their variants.

\vspace{1.80ex}
\subsection{Our contributions: Computing permanents on a trellis}
\vspace{-0.50ex}

\noindent
In this paper, we propose a very 
different approach to exact permanent computation.
Specifically, we use a graph structure called 
\emph{trellis} to compute permanents. 
One advantage of this approach is that tools and
results from the theory of trellises 
can be 
brought to bear on permanent computation, 
as discussed in what follows.\pagebreak[3.99]

Trellis theory is a well-developed branch of coding theory.
The trellis was invented by Forney \cite{Forney.1967} over 50 years ago
to illustrate the Viterbi decoding algorithm \cite{Viterbi.1967} for
convolutional codes.
It has since been studied extensively by coding theorists; 
see \cite{Vardy.1998} for a survey.
Roughly speaking, a trellis $T$ is an edge-labeled directed graph 
where all paths between two distinguished~vertices 
(called the \emph{root} and the \emph{toor}) have the same length $n$.
Hence, if the edge labels come from an 
alphabet $\Sigma$,
we can regard the set of paths from the root to the toor in $T$
as a code $\cC$ of length $n$ over~$\Sigma$. We then say that $T$
\emph{represents} $\cC$.
Given a code $\cC$, one key objective 
is to find the \emph{minimal trellis} 
for $\cC$ --- that is, a trellis that represents $\cC$ with 
as few vertices and edges as possible. To this end, the authors
of \cite{Kschischang.1996,VardyK.1996}, introduced 
a \emph{vertex merging procedure} that allows one to reduce 
the number of vertices in a trellis while maintaining the set 
of path labels. Furthermore,~it~is shown in \cite{Kschischang.1996,VardyK.1996}
that this vertex merging procedure \emph{always} results in {the unique} 
minimal trellis for $\cC$, provided $\cC$ belongs to a certain large
class of codes known as \emph{rectangular codes}.

Herein, we regard $\sS_n$, the set of permutations of $[n]$, 
as a code of length $n$ over the alphabet $[n]$. Although this code
is clearly nonlinear, it turns out that it is rectangular. Thus we
can use the vertex merging procedure of \cite{Kschischang.1996,VardyK.1996}
to find the unique minimal trellis representation $\cT_n$ for $\sS_n$.
We henceforth refer to $\cT_n$ as the~\emph{canonical trellis}.
With this, the permanent of a general $n \times n$ matrix $\vA$
can be computed as the flow from the root to the toor on the 
canonical trellis, when its edges are re-labeled with the entries in $\vA$.
To~com\-pute this flow, we use the Viterbi algorithm over the 
sum-product semiring (see~\cite[Chapter\,3]{Vardy.1998}).

The resulting computation requires $n(2^{n-1}-1)$ multiplications
and $(n-2)2^{n-1}+1$ additions. This is of the same order as
the number of arithmetic operations required to evaluate
the Ryser formula \eq{Ryser} or the Glynn formula \eq{Glynn}.
However, a more refined analysis, shows that the trellis-based 
approach is slightly better. Using a simple normalization, the number 
of multiplications in the Viterbi algorithm can be reduced to 
$$
n2^{n-1}-\ceil{\frac{n}{2}}\binom{n}{\floor{n/2}}+~n^2-n
\ = \ 
\Bigl(n-\sqrt{2n/\pi}\Bigr)2^{n-1} + O\bigl(n^{-3/2}2^n\bigr)
$$
while the number of additions remains the same.
In contrast, the best-known exact methods, 
namely those~of Nijenhuis-Wilf~\cite{NW.1978} and Glynn~\cite{Glynn.2010},
require $(n-1)2^{n-1}$ multiplications and $(n+1)2^{n-1}+O(n^2)$ additions
(for quick reference, we summarize these and other complexity measures
in \Tbref{table:general}). 
Thus the trellis-based approach introduced herein is \emph{slighly faster}
than any other known method for exactly computing the permanent of general
matrices, 
albeit at the expense of exponential space complexity.

\begin{table*}[!h]
	\begin{center}
		$\,$\\[1.44ex]
		\small
		\renewcommand{\arraystretch}{1.5}
		\begin{tabular}{| l | c | c |}
			\hline
			\bfseries\sffamily\footnotesize Method & 
			\bfseries\sffamily\footnotesize Number of multiplications & 
			\bfseries\sffamily\footnotesize Number of additions  
			\\
			\hline\hline
			Ryser~\cite{Ryser.1963} & 
			$2(n-1)2^{n-1}-(n-1)$ & $(n^2-2n+2)2^{n-1}+n-2$  \\ \hline
			Ryser~\cite{Ryser.1963} with Gray code ordering~ & 
			$(n-1)2^n-(n-1)$ & $2(n+1)2^{n-1}-2(n+1)$   \\ \hline
			Nijenhuis-Wilf~\cite{NW.1978} & 
			$(n-1)2^{n-1}$ & $(n+1)2^{n-1}+n^2-2n-1$  \\ \hline
			Glynn~\cite{Glynn.2010} with Gray code ordering~ & 
			$(n-1)2^{n-1}$ & $(n+1)2^{n-1}+n^2-2n-1$  \\ \hline\hline
			Canonical trellis & 
			$n2^{n-1}-n$ & $(n-2)2^{n-1}+1$ \\ \hline
			Trellis with normalization & 
			$\Strut{0pt}{9pt}n2^{n-1}-\ceil{\frac n2}\binom{n}{\floor{n/2}}+n^2-n$ & 
			$(n-2)2^{n-1}+1$  \\ \hline\hline 
			\multicolumn{3}{c}{}\\[-6.30ex]
		\end{tabular}
	\end{center}
	\caption{Number of arithmetic operations
		required to compute the permanent of general $n\times n$ matrices}
	\label{table:general}
\end{table*}

\looseness=-1
\hspace*{-4.5pt}In addition to the slight improvement in time complexity,  
the trellis-based approach has other merits.~First,
computation on a trellis avoids the overflow problems 
that are characteristic of inclusion-exclusion formulae 
(see, for example, \cite[Chapter 23, page 223]{NW.1978} 
for a discussion). 
%
%
Second, and most~importantly, whenever the matrix $\vA$ has structure, 
the canonical trellis $\cT_n$ may become amenable to vertex merging
and/or pruning. In what follows, we consider several specific 
instances of this circumstance. We point out, however, that the
general principle applies much more broadly: 
vertices in $\cT_n$ 
can be merged whenever they are \emph{mergeable} or pruned whenever
they are \emph{nonessential} (see \Sref{sec:canonical} for the 
definition of these notions),
thereby reducing the complexity of the permanent computation.
We therefore expect~that~results from the theory of trellises
can be applied to many more structured matrices of interest.


\vspace{1.80ex}
\subsection{Our contributions: Structured matrices}
\vspace{-0.50ex}

\noindent
We specifically consider trellis-based computation 
of the permanent in three different scenarios:~matrices
with repeated rows (boson sampling),
computing joint distribution of $t$ order statistics,
and sparse matrices. 
In what follows, we formally state our contributions 
and compare with previously best known~results. 

\subsubsection{Matrices with repeated rows}
\vspace{-0.36ex}

Suppose the $n\times n$ matrix of interest has $t<n$ distinct rows,
where $t$ is a constant. We further assume that these $t$ distinct 
rows appear with multiplicities $m_1,m_2,\ldots,m_t$. The task of
computing the permanent~of such a matrix arises in the context
of boson sampling, as proposed in~\cite{Aaronson.2011}.
Clifford and Clifford \cite{Clifford.2018,Clifford.2020}~used 
generalized Gray codes to provide a faster method of
evaluating a formula due to Shchesnovich \cite[Appendix\,D]{Shchesnovich.2013}. 
The resulting computation requires
\be{Clifford-complexity}
(n-1)\biggl[\,\prod_{i=1}^t(m_i+1)-1\biggr]~
\text{multiplications}
\hspace*{3.60ex}\text{and}\hspace*{3.60ex}
(n+1)\biggl[\,\prod_{i=1}^t(m_i+1)-\,2\biggr]~
\text{additions}\hspace*{1.80ex}
\ee
In \Sref{sec:repeated}, we directly construct the minimal trellis for
computing the permanent of matrices with repeated rows. This trellis has
$|V| = (m_1+1)(m_2+1)\cdots(m_t+1)$ vertices and $|E| \le t|V|$ edges.
The resulting computation 
requires at most 
$t \prod_{i=1}^t(m_i+1)$ multiplications and at most 
$(t-1)\prod_{i=1}^t(m_i+1)$ additions. 
Thus, as compared to \eq{Clifford-complexity}, 
computing the permanent on a trellis reduces the number 
of arithmetic operations by a factor of about $n/t$.
As a consequence, classical algorithms would be 
able to solve the exact boson~sam\-pling problem 
for system sizes beyond what was previously possible.

\vspace{0.90ex}
\subsubsection{Order statistics}
\vspace{-0.36ex}

Suppose we have $n$ independent, but not necessarily identical,
real-valued random variables $X_1,X_2,\ldots,X_n$. We draw one 
sample from each distribution and order them so that 
\smash{$X_{(1)}\le X_{(2)}\le \ldots\le X_{(n)}$}. Further, let us
fix $t$ distinct integers $r_1,r_2,\dots,r_t$ with 
$1\le r_1<r_2<\cdots <r_t\le n$ 
and $t$ real values $x_1,x_2,\dots,x_t$
with $x_1\le x_2\le \cdots \le x_t$. 
Then the task of interest is to evaluate the joint probability 
\smash{$\prob{\bigwedge_{\ell=1}^t X_{(r_\ell)}\le x_\ell}$}. 

It is shown in \cite{Vaughan.1972} 
and \cite{Bapat.1989, Balasubramanian.1991, Balasubramanian.1996}
that this joint probability can be computed by the summing
suitably scaled permanent functions. Assuming that
$t, r_1, r_2,\ldots, r_t$ and $x_1, x_2,\ldots, x_t$ are given constants,
the number~of~permanents in this summation is $\Theta(n^t)$.
However, to the best of our knowledge,
polynomial-time algorithms are available 
only for the case 
where the $n$ random variables are drawn from \emph{at most two} 
distributions~\cite{Glueck.2008}.

In contrast, we show herein that the joint distribution can be computed
efficiently even if 
\emph{all $n$ distributions are distinct}. To this end, we first observe
that each of the $\Theta(n^t)$ permanent functions 
is evaluated on a~matrix with at most $t+1$ distinct rows. 
Now, if we naively invoke $\Theta(n^{t})$ times 
the computation~of~\Sref{sec:repeated},
which evaluates each permanent with complexity $O(n^{t+1})$,
we obtain a running time of $O(n^{2t+1})$. 
However, we can do 
much better.
Rather than of constructing $\Theta(n^t)$ trellises, 
we merge all of them into a~\emph{single trellis} 
with at most $n^{t+1}$ vertices.
We then use an appropriate modification of the Viterbi algorithm 
to evaluate the sum of $\Theta(n^t)$ different flows, all at once,
on the combined trellis.
With this, we are able to compute the joint distribution using at most
$(t+1)n^{t+1}$ multiplications and at most $(t+1)n^{t+1}$ additions.

\vspace{0.90ex}
\subsubsection{Sparse matrices}\label{sec:intro-sparse}
\vspace{-0.36ex}

Another form of matrix structure 
is sparsity. It is known that
for matrices with few nonzero entries,~the~com\-plexity of permanent computation
can be reduced by an exponential factor. 
Over the past decade, this was shown in several papers
\cite{Servedio.2005, Bjorklund.2012, Lundow.2020} 
under various sparsity assumptions. The Ryser formula \eq{Ryser} is
central to the analysis in all these papers.
This formula expresses the permanent as the sum of $2^n-1$ terms,
and each of these terms corresponds to a set $S$ of row indices.
When the matrix $\vA$ is sparse, many of these terms are zero.
Specifically, let \smash{$\vA=\big(a_{ij}\big)_{1\le i, j\le n}$}
and define
$$
\cE
\,\triangleq\,
\Bigl\{
S\subseteq [n]
~:~ \text{there exists $j\in[n]$ such that $a_{ij}=0$ for all $i\in S$}
\Bigr\}
\vspace*{-0.90ex}
$$ 
\looseness=-1
Clearly, those terms in \eq{Ryser} that correspond to $S \,{\in}\, \cE$
need not be evaluated. Let \smash{$\cF = 2^{[n]}\setminus\cE$}
be~the~complement of $\cE$,
and fix an integer $d < n$.
With this notation, \cite{Servedio.2005, Bjorklund.2012, Lundow.2020} 
establish the following bounds on $|\cF|$.
\begin{quote}
	\begin{itemize}
		\item 
		Servedio and Wan \cite{Servedio.2005} showed that if the \emph{total number}
		of nonzero entries in a matrix is at most $dn$, 
		then $|\cF|\le \phi_1^n$, 
		where $\phi_1=2\left(1-1/2^{2d}\right)^{1/8d}$.
		\vspace{0.90ex}
		
		\item 
		Bj\"orklund, Husfeldt, Kaski, and Koivisto \cite{Bjorklund.2012}, 
		showed that if a matrix has at most $d$ nonzero entries \emph{in~every row}, 
		then $|\cF|\le \phi_2^n$, where $\phi_2=(2^d-1)^{1/d}$.
		\vspace{0.90ex}
		
		\item 
		Lundow and Markst\"orm \cite{Lundow.2020}
		showed that if a matrix has at most $d$ nonzero entries 
		\emph{in every row and every column}, 
		then $|\cF|\le \phi_3^n$, where $\phi_3=(2^d-1)^{1/d^2} 2^{1-1/d}$.
	\end{itemize}
\end{quote}
Consequently, for each of these methods, the number of multiplications
required to compute the permanent is at most $(n-1)\phi^n$, 
where $\phi\in\{\phi_1,\phi_2,\phi_3\}$. 
We omit the analysis of the number of additions.
Such~analysis would be quite involved since it depends on the size $|S|$
of the row-index subsets $S \in \cF$.
A more~serious\linebreak problem with the approach of \cite{Bjorklund.2012}
is this: 
it is not clear how the subsets in $\cF$ can be efficiently generated.
In any case, we do not include the complexity of generating $\cF$
in our comparisons (cf.~\Tbref{table:sparse}).

\begin{table*}[!h]
	\begin{center}
		$\,$\\[1.08ex]
		\begin{tabular}{|c |ccccc|c|}
			\hline
			$d$ & 2 & 3 & 4 & 5 & 6 & 
			\begin{tabular}{c}
				\text{\bfseries\sffamily\footnotesize Number of} 
				\\[-0.90ex]
				\text{\bfseries\sffamily\footnotesize \Strut{0ex}{0.72ex}multiplications}
			\end{tabular}
			\\ \hline\hline
			$\phi_1$ & 1.99195  & 1.99869  & 1.99976  & 1.99995  & 1.99999 & 
			$(n-1)\phi_1^n$\Strut{2.70ex}{0ex}\\
			$\phi_2$ & 1.73205  & 1.91293  & 1.96799  & 1.98734  & 1.99476 & 
			$(n-1)\phi_2^n$\\
			$\phi_3$ & 1.86121  & 1.97055  & 1.99195  & 1.99746  & 1.99913 & 
			$(n-1)\phi_3^n$\Strut{0ex}{1.44ex}\\
			\hline
			$\phi_T$ & 1.86466  & 1.95021  & 1.98168  & 1.99326  & 1.99752 & 
			$d\phi_T^n$\Strut{2.70ex}{0ex}\\[0.25ex]
			$\phi_U$ & 1.40255  & 1.63691  & 1.77824  & 1.86430  & 1.91684 & 
			$d\phi_U^n\Strut{0ex}{1.44ex}$
			\\ \hline\hline
			\multicolumn{7}{c}{}\\[-4.50ex]
		\end{tabular}
	\end{center}
	\caption{Number of multiplications required 
		to compute the permanent of sparse matrices}
	\label{table:sparse}
\end{table*}

The trellis-based approach avoids these issues. The key observation is
that when the matrix $\vA$ is sparse, most vertices in the canonical 
trellis $\cT_n$ become \emph{non-essential}, meaning that they do not
lie on any path from the root to the toor. Such vertices 
(and all the edges incident upon them) can be pruned away without 
affecting the flow
from the root to the toor and, hence, the permanent.

Formally, we relax the sparsity~assumptions, adopting a probabilistic
model instead. As before,~fix~an~integer $d<n$, and assume~that~the 
matrix $\vA$ is randomly generated: each entry in $\vA$ is nonzero with 
probability $d/n$ and zero otherwise. For this model, we provide in 
\Sref{sec:sparse} a simple method to prune the canonical trellis $\cT_n$,
and show that the resulting trellis has exponentially fewer vertices
with high probability. We furthermore prove that the {expected}
number of vertices in this trellis is at most 
\be{Un}
U(n) 
\,\triangleq\
\sum_{j=0}^n\sum_{k=0}^j 
(-1)^k\binom{n}{j}\binom{j}{k}
\left( \left(1 - \frac{d}{n}\right)^{\!k} - \left(1 - \frac{d}{n}\right)^{\!j}\,
\right)^j
\le \,
\Bigl(2-e^{-d}\Bigr)^n
\vspace{-0.54ex}
\ee
Though we do not have a closed-form expression for $U(n)$, 
we compute an estimate of $\phi_U \triangleq \lim_{n\to\infty} U(n)^{1/n}$ 
for all $d \le 6$. The resulting constants $\phi_U$ and $\phi_T = 2-e^{-d}$
are compared with $\phi_1,\phi_2,\phi_3$ in \Tbref{table:sparse}. 

\begin{figure*}[!h]
	\begin{center}
		$\,$\\[1.08ex]
		\includegraphics[height=7.74cm]{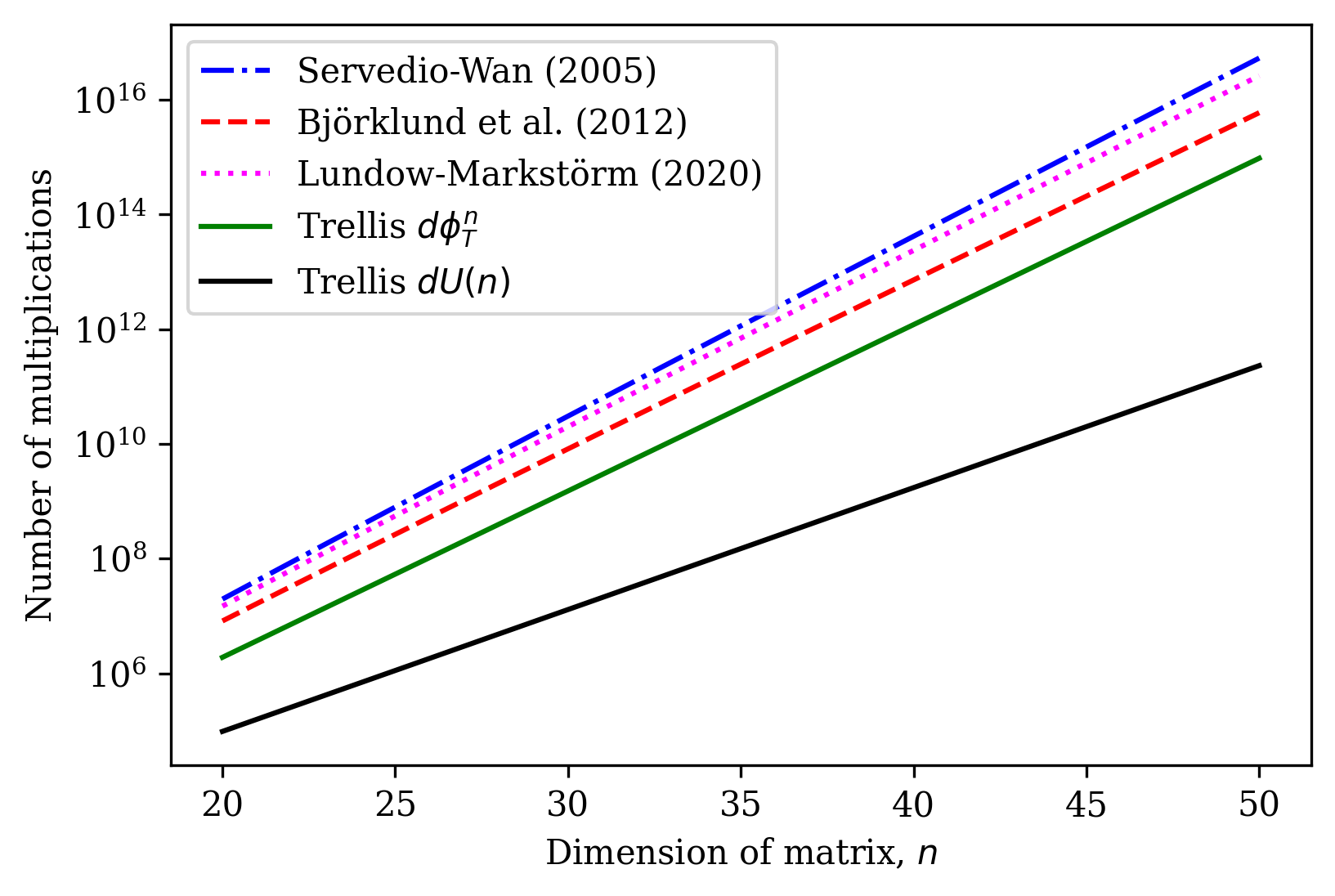}%
		\vspace*{-3.60ex}
	\end{center}
	\caption{Number of multiplications to compute the permanent 
		for sparse matrices of dimension at most 50}
	\label{fig-sparse}
\end{figure*}

For $d=3$ and $n$ up to $50$, we compute the value of $U(n)$ in \eq{Un}
exactly, and the results are plotted in \Fref{fig-sparse}. 
From \Tbref{table:sparse} and \Fref{fig-sparse}, we observe that
--- at least in the probabilistic model --- the trellis-based computation 
is exponentially faster than the known methods.

\vspace{2.70ex}
\subsection{Computing permanent-like functions: Solving the TSP on a trellis}\label{sec:into-tsp}
\vspace{-0.50ex}

We show that trellises can be also used to compute certain ``permanent-like'' 
functions. Specifically,~we~consider the task of evaluating\vspace{-0.54ex}
\begin{equation}
	\label{per*-def}
	\per^*(\vA)
	\, \triangleq 
	\sum_{\vsigma\in\XX} \prod_{i=1}^n a_{i\sigma_i}
\end{equation}
where $\XX$ is a \emph{subset} of the symmetric group $\sS_n$, while the
sum and product operations are over an arbitrary \emph{semiring} 
$(\cS,\cdot,+)$.
The fact that the Viterbi algorithm can be used to compute trellis
flows\footnote{In this paper, we use the term ``flow'' following 
	McEliece~\cite{McEliece.1996}, who gives an excellent exposition 
	of the Viterbi algorithm on trellises. Trellis flows should not be
	confused with \emph{network flows}, as the two are not exactly the same.}
over an arbitrary semi\-ring 
is due to McEliece~\cite{McEliece.1996}.
An important special case is the min-sum semiring, where $+$ is the 
operation of taking the minimum and $\,\cdot\,$ is the ordinary summation.
If we furthermore take $\XX$ to be~the~set 
$\raisebox{-0.36ex}{\huge$\circ$}_n$ of circular permutations 
(a permutation is said to be \emph{circular} 
if its cycle decomposition comprises~exactly one cycle) 
of $[n]$, then \eq{per*-def} becomes\vspace{-2.70ex}
\begin{equation}
	\label{per*-TSP}
	\per^*(\vA)
	\, = \
	{\textstyle\min_{\,\vsigma\in\raisebox{-0.36ex}{\LARGE$\circ$}_n}} \sum_{i=1}^n a_{i\sigma_i}
	\vspace{-.90ex}
\end{equation}
Now, if $\vA$ is the matrix of distances between $n$ cities, then \eq{per*-TSP}
is precisely the length of the shortest traveling salesperson (TSP) tour
visiting every city exactly once \cite{HeldKarp.1962,Bjorklund.2012}.

The remaining problem is to find a trellis representation for the set
$\raisebox{-0.36ex}{\huge$\circ$}_n$ of circular permutations. It turns
out that such a trellis can be obtained as the {intersection} of the
canonical trellis $\cT_n$ with another natural trellis which represents
walks on the complete graph connecting all the cities. We use well known
results from trellis theory~\cite{KoetterV.2003,Kschischang.1996} to 
compute the intersection of these trellises. Curiously, running the
Viterbi algorithm over the min-sum semiring on this intersection 
trellis, we recover the Held-Karp algorithm \cite{HeldKarp.1962} 
which is the best-known exact method for solving the TSP.

\section{Canonical trellis for permanent computation}
\label{sec:canonical}

\vspace{-0.50ex}

In this section, we present the main ingredient in all our computations:
the trellis. First, we formally~define a~trellis and describe how to
compute the permanent with the Viterbi algorithm. Then we provide 
a canonical trellis that represents the set of all permutations, 
and show that the permanent computation on this trellis invokes slightly 
less multiplications and additions than the state-of-the-art methods.

A {\em trellis} $\cT=(V, E, L)$ is an edge-labelled directed graph,
where $V$ is the set of {\em vertices}, $E$ is the~set~of ordered
pairs $(v,v')\in V\times V$, called {\em edges}, and $L$ is the 
{\em edge-labelling function}. Specifically, $L$~is~a~function that maps 
an edge $e\in E$ to a symbol $\sigma$ in $\Sigma$, the {\em label
	alphabet}. In this work, the label alphabet $\Sigma$ will be either
$[n]$ or $\mathbb{F}$, the field that our matrix $\vA$ is defined
upon.

The defining property of a trellis is that the set $V$ of vertices can
be partitioned into $V_0,V_1,\ldots, V_n$ such that every edge
$(v,v')$ begins at $v\in V_{j-1}$ and terminates at $v'\in V_j$ for
some $j\in [n]$. For most of this work, the subsets $V_0$ and $V_n$
are singletons, called the {\em root} and the {\em toor},
respectively.  For each path $\vp$ defined by its edge sequence
$e_1e_2\cdots e_t$, we associate the path with its {\em label string
	$L(\vp)\triangleq L(e_1)L(e_2)\cdots L(e_t)\in \Sigma^t$}.  Then for a
given trellis $\cT$, the {\em multiset} of all paths from the root to
toor is denoted $\cC(\cT)$ and we say~that~$\cT$ is a {\em trellis
	representation} for the collection of words in $\cC(\cT)$. 
~{\bfseries\itshape Notation:} 
we will use ``+'' to denote~a~multiset union of paths. For example, the
collection of strings $00,00,11$ will be written as $2(00)+11$.

\begin{example}
	Set $n=3$. Consider the following trellis $\cT$ with $\Sigma=[n]$.
	
	\begin{center}
		\small
		\begin{tikzpicture}[x=3cm,y=1.5cm]
			\tikzstyle{state}=[rectangle,fill=white,draw,line width=0.8mm]
			\tikzstyle{label}=[fill=white, inner sep=1pt]
			\node[state] at (0,0) (0) {$\varnothing$};
			\node[state] at (1,1) (1) {$1$};
			\node[state] at (1,0) (2) {$2$};
			\node[state] at (1,-1) (3) {$3$};
			\node[state] at (2,1.5) (12) {$12$};
			\node[state] at (2,1.0) (13) {$13$};
			\node[state] at (2,0.4) (21) {$21$};
			\node[state] at (2,-0.4) (23) {$23$};
			\node[state] at (2,-1.0) (31) {$31$};
			\node[state] at (2,-1.5) (32) {$32$};
			\node[state] at (3,0) (123) {$123$};
			
			\path[-] (0) edge  node[label] {1} (1);
			\path[-] (0) edge  node[label] {2} (2);
			\path[-] (0) edge  node[label] {3} (3);
			
			\path[-] (1) edge  node[label] {2} (12);
			\path[-] (1) edge  node[label] {3} (13);
			\path[-] (2) edge  node[label] {1} (21);
			\path[-] (2) edge  node[label] {3} (23);
			\path[-] (3) edge  node[label] {1} (31);
			\path[-] (3) edge  node[label] {2} (32);
			
			\path[-] (12) edge  node[label] {3} (123);
			\path[-] (13) edge  node[label] {2} (123);
			\path[-] (21) edge  node[label] {3} (123);
			\path[-] (23) edge  node[label] {1} (123);
			\path[-] (31) edge  node[label] {2} (123);
			\path[-] (32) edge  node[label] {1} (123);
		\end{tikzpicture}
	\end{center}
	Here $V_0=\{\varnothing\}, V_1=\{1,2,3\}, V_2=\{12,13,21,23,31,32\}$,
	abd $V_3=\{123\}$. 
	There are six paths from $\varnothing$ to $123$ and 
	$\cC(\cT) = \sum_{\vsigma\in \sS_3} \vsigma =\sS_3$. 
	Hence we say that $\cT$ is a trellis representation for $\sS_3$.
\end{example}

\looseness=-1
Herein, we are interested in trellis representations for
$\sS_n$, the set of all permutations, because we can use the Viterbi
algorithm on such a trellis to compute the permanent of a matrix.
The Viterbi algorithm is an application of the dynamic programming
method pioneered by Bellman \cite{Bellman.1957}. It was introduced
by Viterbi \cite{Viterbi.1967} in 1967 to perform maximum-likelihood
decoding of convolutional codes. Here, we describe the
Viterbi algorithm in the context of permanents; a more general
version of the algorithm is described in \Sref{sec:tsp}.
\vspace{1mm}

Consider an $n\times n$ matrix $\vA$, and  
suppose that $\cT=(V,E,L)$ is a trellis representation for $\sS_n$,
with $V_0=\{\rm root\}$ and $V_n=\{\rm toor\}$. 
Then, to compute the permanent of $\vA$, we do the following.
\begin{quote}
	\begin{enumerate}[(1)]
		\item 
		Relabel the edges with $\vA$, and call the labelling
		$L_\vA$. Specifically, if $e$ is an edge from $V_{j-1}$ to $V_j$ and
		$L(e)=i$, then set the label $L_\vA(e)$ to be $a_{ij}$. Call this new
		trellis $\cT(\vA)$.
		\item 
		Perform the Viterbi algorithm on $\cT(\vA)$. 
		For each node $v\in V$, we assign a flow $\mu(v)$ 
		that is computed in the following recursive manner:\\[1mm]
		\hspace*{20mm}Set $\mu({\rm root})=1$\\[1pt]
		\hspace*{20mm}for $j\in [n]$ \\[0pt]
		\hspace*{30mm}for $v\in V_j$ \\[0pt]
		\hspace*{40mm}set $\mu(v) = \sum_{(u,v)\in E}L_\vA(u,v)\mu(u)$
		
		\item 
		Then $\per(\vA)$ is given by $\mu({\rm toor})$.
	\end{enumerate}
\end{quote}

We refer to this procedure as \emph{trellis-based computation} 
of the permanent. Its complexity can be explicitly measured by the
following graph-theoretic quantities:\vspace*{-9mm}

\begin{align}
	\text{number of multiplications} & = |E|-\deg({\rm root}),\label{eq:trellis-multiplies}\\
	\text{number of additions} & = |E|-|V|+1,\label{eq:trellis-additions}\\
	\text{space} & = \max\left\{|V_0|,|V_1|,\ldots, |V_n|\right\}.\label{eq:trellis-space}
\end{align}
\vspace*{-8mm}

\noindent It follows from these expressions that in order to reduce the
complexity, we need to find trellis representations for $\sS_n$ that
use as few vertices and edges as possible. To do so, we look at
{vertex merging}.

\vspace{0.50ex}
\subsection{Minimal trellises and vertex mergeability}

Consider some collection $\cX$ of words of length $n$.
One key objective in the study of trellises in coding theory is to find a ``small'' trellis $\cT$ so that $\cC(\cT)=\cX$. Formally, we say that $\cT^*=(V^*,E^*,L^*)$ is a {\em minimal trellis} for $\cX$ if the following 
holds:\vspace{0.25ex}
\begin{center}
	\centerline{\it\sl
		for all other trellis representations $\cT=(V,E,L)$ of $\cX$, 
		we have $|V^*_j|\le |V_j|$ ~for all~ $j = 0,1,\ldots,n$.}
	\vspace{-3.00ex}
\end{center}
There are examples of word collections that do not admit a minimal
trellis representation.  Nevertheless,~if 
$\cX$ obeys certain properties (cf.~\Dref{def:rectangular}), 
we have that $\cX$ admits a unique minimal
trellis representation. Moreover, there is a simple {\em merging} procedure
that finds this trellis \cite{Kschischang.1996, VardyK.1996}.  Here,
by merging, we refer to a procedure that reduces the number of
vertices and edges in the trellis while preserving the set of
length-$n$ paths in the trellis (henceforth, unless stated otherwise,
a ``set of paths'' also refers to a multiset of paths).

Now, to define our merging procedure, we need to identify when two vertices
can be merged. To this end, we study certain local properties of a
vertex and introduce the notions of ``past'' and ``future'' of a
vertex. Specifically, for a vertex $v$ in the trellis, we define the
following sets:\vspace{-8mm}

\begin{align*}
	\cP(v) & \triangleq 
	~\text{multiset of all label strings of the paths from the root to $v$},
	\\[-1pt]
	\cF(v) & \triangleq 
	~\text{multiset of all label strings of the paths from $v$ to the toor}.
\end{align*}
We refer to $\cP(v)$ and $\cF(v)$ as the {\em past} and {\em future} of $v$ respectively.

\begin{definition}
	Two distinct $v,v'\in V$ are said to be {\em mergeable} if 
	\vspace{-3mm}
	
	\begin{equation}\label{eq:merge}
		\cP(v)\cF(v) +\cP(v')\cF(v')
		=\cP(v)\cF(v')+\cP(v')\cF(v)\,. 
	\end{equation}
	
\end{definition} 
\vspace{-3mm}

Next, we describe the {\em merging process}. Suppose that $v$ and $v'$ are two mergeable vertices in $\cT$ and we want to merge them. We first observe that \eqref{eq:merge} implies that $v$ and $v'$ belong to some $V_j$ for some $j$.
In the new trellis $\cT^*$, we set the vertex set $V^*$ to be $V\setminus\{v,v'\}\cup \{v^*\}$. For the edge set $E^*$, we keep an edge $(w,w')\in E$ and its labels as long as $w\ne v$, $w\ne v'$, $w'\ne v$ and $w'\ne v'$.
\vspace{-3mm}

\begin{itemize}
	\item If both $(w,v)$ and $(w,v')$ are edges, we include the edge $(w,v^*)$ with its label being $L^*(w,v^*)=L(w,v)+L(w,v')$. If $(w,v)$ or $(w,v')\in E$, we include the edge $(w,v^*)$ with $L^*(w,v^*)=L(w,v)$ or $L^*(w,v^*)=L(w,v')$, respectively.\\[-8mm]
	\item If both $(v,w)$ and $(v',w)$ are edges, we include the edge $(v^*,w)$ with the edge-label $L^*(v^*,w)=(L(v,w)+L(v',w))/2$.
	If $(v,w)$ or $(v',w)\in E$, we include the edge $(v^*,w)$ with $L^*(v^*,w)=L(v,w)/2$ or $L^*(v^*,w)=L(v',w)/2$, respectively.
\end{itemize}  
\vspace{-3mm}

Note that we abuse notation by ``adding'' symbols in $\Sigma$ and also ``multiplying" these symbols by rational scalars. We can justify these operations if we regard the multiset of label strings as elements in the semigroup algebra $\mathbb{Q}[\Sigma^+]$.  The technicalities of these justifications are deferred to Appendix~\ref{app:merging}, where we also prove the following result of interest.

\begin{proposition}\label{prop:merging}
	Suppose that $v$ and $v'$ are two mergeable vertices in $\cT$. If $\cT^*$ is the trellis obtained from merging $v$ and $v'$, then $\cC(\cT^*)=\cC(\cT)$.
\end{proposition}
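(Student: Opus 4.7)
The plan is to work inside the semigroup algebra $\mathbb{Q}[\Sigma^+]$ alluded to in the paragraph preceding the proposition, where each $\cP(u)$ and $\cF(u)$ is a legitimate algebra element, the product $\cP(u)\cF(u)$ equals the multiset of label strings of all root-to-toor paths through $u$, and, at any intermediate depth $j$, one has $\cC(\cT) = \sum_{u \in V_j} \cP(u)\cF(u)$. First I would observe that the mergeability identity \eqref{eq:merge} forces $v$ and $v'$ to lie at a common depth $j$: otherwise the cross products $\cP(v)\cF(v')$ and $\cP(v')\cF(v)$ contain strings whose lengths differ from $n$ and cannot match the left-hand side. Since no root-to-toor path can visit two distinct vertices at the same depth, I can cleanly split
\begin{equation*}
\cC(\cT) \ = \ \mathcal{R} \, + \, \cP(v)\cF(v) \, + \, \cP(v')\cF(v'),
\end{equation*}
where $\mathcal{R}$ collects the paths avoiding both $v$ and $v'$. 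Because the merging procedure leaves every vertex, edge, and label outside those incident to $\{v,v'\}$ untouched, the analogous decomposition $\cC(\cT^*) = \mathcal{R} + \cP_{\cT^*}(v^*)\cF_{\cT^*}(v^*)$ holds with the \emph{same} residual $\mathcal{R}$, so the proposition reduces to matching the two singled-out products.

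The second step is to compute $\cP_{\cT^*}(v^*)$ and $\cF_{\cT^*}(v^*)$ directly from the construction. Summing over predecessors $w \in V_{j-1}$, and treating any missing edge as the zero element of the algebra, the new incoming edge to $v^*$ from $w$ carries label $L(w,v) + L(w,v')$; since merging has no effect at depths below $j-1$, one also has $\cP_{\cT^*}(w) = \cP(w)$. These facts collapse to $\cP_{\cT^*}(v^*) = \cP(v) + \cP(v')$. The symmetric calculation on outgoing edges, whose labels are halved, yields $\cF_{\cT^*}(v^*) = \tfrac{1}{2}(\cF(v) + \cF(v'))$. Expanding the product gives
\begin{equation*}
\cP_{\cT^*}(v^*)\,\cF_{\cT^*}(v^*) \ = \ \tfrac{1}{2}\bigl[\cP(v)\cF(v) + \cP(v')\cF(v') + \cP(v)\cF(v') + \cP(v')\cF(v)\bigr],
\end{equation*}
and substituting the mergeability identity \eqref{eq:merge} into the last two summands replaces them with $\cP(v)\cF(v) + \cP(v')\cF(v')$, whereupon the factor $\tfrac{1}{2}$ cancels and I obtain exactly $\cP(v)\cF(v) + \cP(v')\cF(v')$.

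The main obstacle I anticipate is not combinatorial but formal: the merging recipe openly adds and halves edge labels, operations that are meaningless on raw symbols of $\Sigma$ and only acquire content once one passes to $\mathbb{Q}[\Sigma^+]$. The paper defers these technicalities to Appendix~\ref{app:merging}, and my sketch really amounts to a template that must be reread inside that algebra: in particular I need the convention that non-existent edges contribute the zero element, so that predecessors or successors incident to only one of $\{v,v'\}$ are handled uniformly without awkward case splits, and I need that $\cP_{\cT^*}(w) = \cP(w)$ holds at the level of multisets, not merely supports. Once that foundation is in place, the above reasoning reduces the proposition to the two-line algebraic manipulation outlined.
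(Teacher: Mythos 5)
Your proposal is correct and follows essentially the same route as the paper's proof in Appendix~\ref{app:merging}: both reduce the claim to showing $\cP(v^*)\cF(v^*)=\cP(v)\cF(v)+\cP(v')\cF(v')$ (since paths avoiding $v$ and $v'$ are untouched), adopt the zero-label convention for non-edges, compute $\cP(v^*)=\cP(v)+\cP(v')$ and $\cF(v^*)=\tfrac12\bigl(\cF(v)+\cF(v')\bigr)$ in $\mathbb{Q}[\Sigma^+]$, and finish by expanding the product and substituting the mergeability identity \eqref{eq:merge} into the cross terms. The only cosmetic differences are that you make explicit the common-depth observation and the residual decomposition with $\mathcal{R}$, which the paper states more briefly.
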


\begin{example}Consider again the trellis $\cT$ with $\cC(\cT)=\sS_3$.
	After merging, we obtain the trellis on the left, which we call $\cT_3$.
	\vspace{1mm}
	
	\begin{center}
		\begin{tikzpicture}[x=2cm,y=2cm]
			\tikzstyle{state}=[rectangle,fill=white,draw,line width=0.8mm]
			\tikzstyle{label}=[fill=white, inner sep=1pt]
			\node[state] at (0,0) (0) {$\varnothing$};
			\node[state] at (1,1) (1) {$1$};
			\node[state] at (1,0) (2) {$2$};
			\node[state] at (1,-1) (3) {$3$};
			\node[state] at (2,1) (12) {$12$};
			\node[state] at (2,0) (13) {$13$};
			\node[state] at (2,-1) (23) {$23$};
			\node[state] at (3,0) (123) {$123$};
			
			\path[-] (0) edge  node[label] {1} (1);
			\path[-] (0) edge  node[label] {2} (2);
			\path[-] (0) edge  node[label] {3} (3);
			
			\path[-] (1) edge  node[label] {2} (12);
			\path[-] (1) edge  node[pos=0.3, label] {3} (13);
			\path[-] (2) edge  node[pos=0.3, label] {1} (12);
			\path[-] (2) edge  node[pos=0.3, label] {3} (23);
			\path[-] (3) edge  node[pos=0.3, label] {1} (13);
			\path[-] (3) edge  node[label] {2} (23);
			
			\path[-] (12) edge  node[label] {3} (123);
			\path[-] (13) edge  node[label] {2} (123);
			\path[-] (23) edge  node[label] {1} (123);
		\end{tikzpicture}
		\hspace{20mm} 
		\begin{tikzpicture}[x=2cm,y=2cm]
			\tikzstyle{state}=[rectangle,fill=white,draw,line width=0.8mm]
			\tikzstyle{label}=[fill=white, inner sep=1pt]
			\node[state] at (0,0) (0) {$\varnothing$};
			\node[state] at (1,1) (1) {$1$};
			\node[state] at (1,0) (2) {$2$};
			\node[state] at (1,-1) (3) {$3$};
			\node[state] at (2,1) (12) {$12$};
			\node[state] at (2,0) (13) {$13$};
			\node[state] at (2,-1) (23) {$23$};
			\node[state] at (3,0) (123) {$123$};
			
			\path[-] (0) edge  node[label] {$a_{11}$} (1);
			\path[-] (0) edge  node[label] {$a_{21}$} (2);
			\path[-] (0) edge  node[label] {$a_{31}$} (3);
			
			\path[-] (1) edge  node[label] {$a_{22}$} (12);
			\path[-] (1) edge  node[pos=0.3, label] {$a_{32}$} (13);
			\path[-] (2) edge  node[pos=0.3, label] {$a_{12}$} (12);
			\path[-] (2) edge  node[pos=0.3, label] {$a_{32}$} (23);
			\path[-] (3) edge  node[pos=0.3, label] {$a_{12}$} (13);
			\path[-] (3) edge  node[label] {$a_{22}$} (23);
			
			\path[-] (12) edge  node[label] {$a_{33}$} (123);
			\path[-] (13) edge  node[label] {$a_{23}$} (123);
			\path[-] (23) edge  node[label] {$a_{13}$} (123);
		\end{tikzpicture}
	\end{center}
	\vspace{3mm}
	
	On the right is $\cT_3(\vA)$ where we relabel the edges in $\cT_3$ using the entries of $\vA$. This example can be generalized and henceforth, this trellis is referred to as the canonical trellis representation for $\vA$.
\end{example}

\begin{definition}[Canonical Trellis]\label{def:canonical}
	Fix $n$. Then the {\em canonical permutation trellis $\cT_n$} is defined as follows.
	\begin{itemize}
		\item (Vertices) For $0\le j\le n$, define $V_j$ to be the set of all $j$-subsets of $[n]$. Hence, $|V_j|=\binom{n}{j}$. So, $V=\bigcup_{j=0}^n V_j$ is the power set of $[n]$ and $|V|=2^n$.\\[-8mm]
		\item (Edges) For $j\in [n]$, we consider a pair $(u,v)\in V_{j-1}\times V_j$. Recall that $u$ and $v$ are $(j-1)$- and $j$-subsets, respectively. We have that $(u,v)$ is an edge if and only if $|v\setminus u|=1$.\\[-8mm]
		\item (Edge Labels) For an edge $(u,v)$, we have that $v\in V_j$ for some $j\in [n]$. Also, since $v\setminus u$~is~a~singleton, we set $i\in[n]$ such that $v\setminus u=\{i\}$. Then $L(u,v)=i$.
	\end{itemize}
	The {\em canonical trellis for $\vA$} is given by $\cT_n(\vA)$.
\end{definition}

It turns out that no two vertices in $\cT_n$ are mergeable and, in
fact, $\cT_n$ is the minimal trellis. The proof is a~straightforward
application of trellis theory and is therefore deferred to
Appendix\,\ref{app:minimal}.

\begin{proposition}\label{prop:minimal}
	$\cT_n$ is the minimal trellis for the set of all permutations $\sS_n$.
\end{proposition}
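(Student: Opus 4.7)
My plan is to establish minimality of $\cT_n$ by combining three ingredients: (i) verifying that $\cT_n$ is indeed a trellis representation for $\sS_n$; (ii) showing that no two distinct vertices of $\cT_n$ are mergeable in the sense of \eqref{eq:merge}; and (iii) appealing to the standard result from trellis theory that whenever a code is \emph{rectangular}, it admits a unique minimal trellis that can be obtained from any representation by iteratively merging mergeable pairs (cf.~\cite{Kschischang.1996,VardyK.1996}). Combined with the fact that $\sS_n$ is rectangular, which I verify below, (ii) and (iii) imply that no further merging is possible starting from $\cT_n$, so $\cT_n$ is already the minimal trellis.

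For (i), any path from the root $\varnothing$ to the toor $[n]$ traverses $n$ edges, each adjoining a single new element to the current subset, so the label string is a permutation of $[n]$; conversely each $\vsigma\in\sS_n$ determines the unique path whose $j$-th vertex is $\{\sigma_1,\ldots,\sigma_j\}$. Thus $\cC(\cT_n)=\sS_n$ as multisets.

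For (ii), the key observation is that the past and future of a vertex $u\in V_j$ admit a transparent description: $\cP(u)$ consists of all orderings (permutations) of the elements of $u$, written as length-$j$ strings, and $\cF(u)$ consists of all orderings of the elements of $[n]\setminus u$. Now take two distinct $u,u'\in V_j$. Since $u\ne u'$, after possibly swapping roles we can pick some $i\in u\setminus u'$. Every string in the multiset $\cP(u)\cF(u')$ then contains the letter $i$ both among its first $j$ positions (because $i\in u$) and among its last $n-j$ positions (because $i\in[n]\setminus u'$), so it is \emph{not} a permutation of $[n]$. But every string in $\cP(u)\cF(u)+\cP(u')\cF(u')$ \emph{is} a permutation of $[n]$. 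Hence the two sides of \eqref{eq:merge} cannot be equal, and $u,u'$ are not mergeable.

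The one slightly subtle point, which I regard as the main obstacle, is the rectangularity of $\sS_n$. For each $j$ and each subset $S\subseteq[n]$ of size $j$, the permutations in $\sS_n$ whose first-$j$ letters lie in $S$ are precisely (permutations of $S$) concatenated with (permutations of $[n]\setminus S$), a direct product; together with the fact from step (ii) that distinct $S$ give rise to distinct state classes, this is exactly the rectangularity condition in the sense of \cite{Kschischang.1996,VardyK.1996}. With rectangularity in hand, that reference guarantees the minimal trellis of $\sS_n$ is unique and is obtained from any trellis representation of $\sS_n$ by iterated vertex merging; since $\cT_n$ admits no mergeable pair, it must coincide with the minimal trellis.
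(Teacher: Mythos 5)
Your steps (i) and (ii) are sound, and your rectangularity argument is essentially the paper's own: the paper verifies $\va\vc,\va\vd,\vb\vc\in\sS_n\Rightarrow\vb\vd\in\sS_n$ directly, which is just the three-word form of your product decomposition $\sS_n=\bigsqcup_{S}\,(\text{orderings of }S)\cdot(\text{orderings of }[n]\setminus S)$; your added remark that ``distinct $S$ give rise to distinct state classes'' is not needed for rectangularity as defined in Definition~\ref{def:rectangular}. Where you genuinely diverge is the final step. The paper does \emph{not} argue via non-mergeability at all: it quotes Theorem~\ref{thm:rectangular} (Vardy--Kschischang: for a trellis representing a rectangular code, minimal $\Leftrightarrow$ biproper) and then checks biproperness of $\cT_n$ by a purely local inspection --- at a vertex $v\in V_j$, the $n-j$ outgoing edges are distinctly labeled by the elements of $[n]\setminus v$ and the $j$ incoming edges by the elements of $v$.

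The gap in your route is step (iii). You invoke the statement that, for rectangular codes, iterated merging from \emph{any} trellis representation reaches the unique minimal trellis, so that a merge-irreducible trellis must be minimal. That claim appears only informally in the paper's introduction; nothing proved in the paper supplies it in the form you need. Proposition~\ref{prop:merging} shows only that a single merge preserves $\cC(\cT)$, and --- as the paper's own remark emphasizes --- the mergeability condition \eqref{eq:merge} is a multiset identity that differs from the set-based notion of \cite{Kschischang.1996,VardyK.1996}; moreover the merging procedure here produces rational linear combinations of labels, so merged trellises leave the classical category those references treat, and the classical convergence results carry hypotheses (properness, all vertices essential) that you never verify. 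Hence ``no pair of vertices of $\cT_n$ satisfies \eqref{eq:merge}'' does not, by anything established here, yield the level-wise bound $|V_j^*|\le|V_j|$ against \emph{all} representations. A mitigating observation: your duplicate-letter argument actually proves more than the failure of \eqref{eq:merge} --- it shows $\cP(u)\cF(u')$ contains no permutation whatsoever, so $u,u'$ are non-mergeable under the set-based definition as well. But the cleanest repair is also the shortest: note that $\cT_n$ is biproper (immediate from the labeling just described), and conclude from your rectangularity proof together with Theorem~\ref{thm:rectangular}; that is precisely the paper's proof.
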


Next, we determine the number of operations incurred when we use $\cT_n(\vA)$ to compute $\per(\vA)$.

\begin{theorem}\label{thm:canonical}
	Let $\cT_n(\vA)=(V,E,L)$. Then
	$|V| = 2^n$ and $|E| = n2^{n-1}$.
	Therefore, $\per(\vA)$ can be computed using $n2^{n-1}-n$ multiplications and $(n-2)2^{n-1}+1$ additions with $\binom{n}{\floor{n/2}}$ space.
\end{theorem}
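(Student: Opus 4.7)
The proof is essentially a bookkeeping exercise that applies formulas \eqref{eq:trellis-multiplies}--\eqref{eq:trellis-space} to the structural description of $\cT_n$ given in \Dref{def:canonical}. The plan is to count $|V|$, $|E|$, $\deg(\text{root})$, and $\max_j|V_j|$, and then plug in.

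First I would count vertices. Since $V_j$ consists of all $j$-subsets of $[n]$, we have $|V_j|=\binom{n}{j}$, and summing over $j$ yields $|V|=\sum_{j=0}^{n}\binom{n}{j}=2^n$ by the binomial theorem. Since $\binom{n}{j}$ is maximized at $j=\lfloor n/2\rfloor$, this also gives the space bound $\max_j|V_j|=\binom{n}{\lfloor n/2\rfloor}$ in \eqref{eq:trellis-space}.

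Next I would count edges by computing in-degrees. By the edge rule, an edge into $v\in V_j$ must come from a $(j-1)$-subset $u\subset v$, and there are exactly $j$ such subsets (one for each element of $v$ that we could remove). Summing in-degrees gives
\[
|E| \,=\, \sum_{j=1}^{n} j\binom{n}{j} \,=\, n\sum_{j=1}^{n}\binom{n-1}{j-1} \,=\, n\cdot 2^{n-1},
\]
using the identity $j\binom{n}{j}=n\binom{n-1}{j-1}$. The root $\varnothing\in V_0$ has one outgoing edge to each singleton $\{i\}\in V_1$, so $\deg(\text{root})=n$.

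Finally, I would substitute into \eqref{eq:trellis-multiplies} and \eqref{eq:trellis-additions}: the number of multiplications is $|E|-\deg(\text{root})=n2^{n-1}-n$, and the number of additions is $|E|-|V|+1=n2^{n-1}-2^n+1=(n-2)2^{n-1}+1$. There is no real obstacle here; the only step that requires a moment's thought is the edge count, and that reduces to the standard identity $\sum_j j\binom{n}{j}=n2^{n-1}$.
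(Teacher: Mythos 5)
Your proof is correct and follows essentially the same route as the paper: both reduce the theorem to the counts $|V|=2^n$, $\max_j|V_j|=\binom{n}{\floor{n/2}}$, and $|E|=\sum_{j=1}^{n}j\binom{n}{j}=n2^{n-1}$, then substitute into \eqref{eq:trellis-multiplies}--\eqref{eq:trellis-space}. The only (immaterial) difference is that you sum in-degrees ($v\in V_j$ has in-degree $j$) while the paper sums out-degrees ($n-j$ on $V_j$) and reindexes to the same identity; you also make $\deg({\rm root})=n$ explicit, which the paper leaves implicit.
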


\begin{proof}
	The complexity measures follow directly from \eqref{eq:trellis-multiplies}, \eqref{eq:trellis-additions}, and 
	\eqref{eq:trellis-space}. Hence, it suffices to derive the graph-theoretic properties. The size of $V$ and the quantity $\max \{|V_j|: j\in [n]\}$ follow directly from the definition.
	For the number of edges, observe that the outdegree of a vertex in $V_j$ is $n-j$ for $0\le j\le n-1$.
	Therefore, we have that $|E|=\sum_{j=0}^{n-1}(n-j)\binom{n}{j}=\sum_{j=1}^{n}j\binom{n}{j}=n2^{n-1}$.
\end{proof}

Therefore, the number of arithmetic operations required by the permanent computation on $\cT_n(\vA)$ is of the same order as that required to evaluate
the Ryser formula \eq{Ryser} or the Glynn formula \eq{Glynn}.

\begin{remark} 
	In a study of codes with local permutation constraints, Sayir and
	Sarwar~\cite{Sayir.2015} proposed the use of $\cT_n$ to compute 
	the permanent of a matrix. Our work herein is independent from \cite{Sayir.2015}.
	In this work, we provide a detailed analysis
	of the number of arithmetic operations and also show that $\cT_n$ is the
	minimal trellis. Crucially, in the later sections, we use the merging
	procedure and other trellis manipulation techniques to dramatically
	reduce the number of arithmetic operations for certain structured
	matrices.
\end{remark}

\begin{remark} 
	The definition of mergeability and the merging procedure described in
	this section are slightly different from the ones given
	in \cite{Kschischang.1996, VardyK.1996}. This is because in the latter
	work, the authors are interested in preserving the {\em set} of paths
	without accounting for the multiplicities. In contrast, we are
	required to preserve the multiplicity for each path and hence, we
	provide a slightly different definition. As mentioned earlier, the
	correctness of the procedure is proved in
	Appendix~\ref{app:merging}. We also remark that the merging procedure
	mimics the construction of ordered binary decision diagrams (BDDs) for
	Boolean functions (see \cite{Bryant.1992, Bryant.1995} for a survey).
\end{remark}

\vspace{0.90ex}
\subsection{Reducing complexity via trellis normalization}

We propose a simple normalization technique that further reduces
the number of multiplications.
Let us fix $t\in [n]$ and normalize the $t$-th column of $\vA$. 
Specifically, we consider the matrix ${\vA|_t}$ such that
its $(i,j)$-th entry is $a_{ij}/a_{it}$.
Therefore, the $t$-th column of ${\vA|_t}$ consists of all ones%
\footnote{Here, we assume that $a_{it}\ne 0$ for all $i\in [n]$. In Remark~\ref{rem:normalization}, we describe how to define the normalized matrix when some entries in the $t$-th column is zero.}. 
Crucially, when we run~the~Viterbi algorithm on the corresponding
trellis $\cT_n\left({\vA|_t}\right)$, we need not perform any
multiplications to evaluate $\mu(v)$ for all $v\in V_t$. This is so
because for all $v\in V_t$, 
we have $\mu(v) = \sum_{(u,v)\in E}\mu(u)$.  Furthermore,
we recover the permanent of the original matrix $\vA$ by using the
fact that 
$$
\per(\vA) \,=\,
\left(\prod_{i=1}^n a_{it}\right)\per\left({\vA|_t}\right)
$$

Let us analyse the number of multiplications.
First, to normalize the matrix and obtain ${\vA|_t}$, we need $n(n-1)$ 
multiplications (recall that the $t$-th column is all ones by construction).
Next, we look at the number of multiplications in the trellis-based 
computation. 
The number of edges from $V_{t-1}$ to $V_t$ is $(n-t+1)\binom{n}{t-1}$,
and thus we save this quantity of multiplications when using the
normalized matrix ${\vA|_t}$. In other words, the number of
multiplications needed to compute the flow on $\cT_n\left({\vA|_t}\right)$
is $n2^{n-1}-n-(n-t+1)\binom{n}{t-1}$. 
Finally, we multiply $\per\left({\vA|_t}\right)$ by $a_{it}$ for all
$i\in [n]$, and this involves another $n$ multiplications.  Therefore,
in total, the number of multiplications is
$n2^{n-1}-(n-t+1)\binom{n}{t-1}+n^2-n$.

If we choose $t=\floor{n/2}+1$, we obtain the following theorem. 

\begin{theorem}\label{thm:normalization}
	Let $t=\floor{n/2}+1$. 
	Then computing $\per(\vA)$ on the trellis $\cT_n\left({\vA|_t}\right)$ invokes 
	\[
	n2^{n-1}-\ceil{n/2}\binom{n}{\floor{n/2}}+n^2-n \text{ multiplications and }
	(n-2)2^{n-1}+1  \text{ additions}.
	\]
\end{theorem}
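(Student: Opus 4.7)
The plan is essentially to specialize the general bookkeeping that has already been carried out in the paragraph immediately preceding the theorem statement, and then to confirm that no additional additions are introduced by normalization.

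First I would simply invoke the identity derived above the theorem: for any fixed column index $t\in[n]$, if we normalize the $t$-th column of $\vA$, run the Viterbi recursion on $\cT_n(\vA|_t)$, and finally multiply the result by $\prod_{i=1}^n a_{it}$, the total number of multiplications is
\[
n\,2^{n-1}-(n-t+1)\binom{n}{t-1}+n^2-n.
\]
Substituting $t=\floor{n/2}+1$ gives $t-1=\floor{n/2}$ and $n-t+1=n-\floor{n/2}=\ceil{n/2}$, which immediately yields the multiplication count claimed in the theorem, namely $n\,2^{n-1}-\ceil{n/2}\binom{n}{\floor{n/2}}+n^2-n$. This is the main computation; it reduces to substitution into the formula already established.

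Next I would handle the additions. The number of additions performed by the Viterbi algorithm on a trellis depends only on the graph-theoretic data through the identity $|E|-|V|+1$ from \eqref{eq:trellis-additions}. Normalizing a column only rescales edge labels, and multiplying $\per(\vA|_t)$ by $\prod_{i=1}^n a_{it}$ adds $n-1$ multiplications but no additions. Since $\cT_n(\vA|_t)$ has the same vertex set and edge set as $\cT_n(\vA)$, the count from \Tref{thm:canonical} transfers verbatim, giving $n\,2^{n-1}-2^n+1=(n-2)\,2^{n-1}+1$ additions.

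It is also worth remarking, though not strictly necessary for the theorem, why $t=\floor{n/2}+1$ is the optimal choice. The saving in multiplications is $(n-t+1)\binom{n}{t-1}$, and using the identity $(n-j)\binom{n}{j}=n\binom{n-1}{j}$ with $j=t-1$ shows the saving equals $n\binom{n-1}{t-1}$, which is maximized (for both parities of $n$) at $t-1=\floor{n/2}$, i.e.\ at $t=\floor{n/2}+1$. I do not anticipate any real obstacle: everything follows from direct substitution into already-stated counting formulas plus the observation that normalization does not alter the underlying trellis graph.
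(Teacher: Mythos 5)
Your proposal is correct and takes essentially the same route as the paper, which likewise derives the aggregate count $n2^{n-1}-(n-t+1)\binom{n}{t-1}+n^2-n$ in the paragraph preceding the theorem and then substitutes $t=\floor{n/2}+1$ (so that $t-1=\floor{n/2}$ and $n-t+1=\ceil{n/2}$), with the addition count $(n-2)2^{n-1}+1$ carrying over unchanged because normalization rescales edge labels without altering the trellis graph. One trivial quibble: the final scaling by $\prod_{i=1}^n a_{it}$ costs $n$ multiplications in the paper's accounting (one per factor $a_{it}$ applied to the flow), not $n-1$; since you invoke the aggregate formula directly this slip does not affect your totals, and your closing remark justifying the optimality of $t=\floor{n/2}+1$ via $(n-t+1)\binom{n}{t-1}=n\binom{n-1}{t-1}$ is a correct observation not spelled out in the paper.
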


We compare our trellis based approach with the state-of-the-art
methods of computing the permanent.  In \Tbref{table:general}, we
provide the exact number of operations required for Ryser's formula
and its variants. The careful derivation of the number of additions
and multiplications is provided in Appendix~\ref{app:ryserlike}.  From
the the table, we see that the best known prior work uses exactly
\[
M(n) \triangleq (n-1)2^{n-1}\text{ multiplications and } 
A(n) \triangleq (n+1)2^{n-1}+n^2-2n-1 \text{ additions.}
\] 
In contrast, the trellis based method uses $(n-2)2^{n-1}+1$ additions 
which is strictly less than $A(n)$~for~all~$n$.
Combining the trellis based method with normalization techniques
of this subsection, the number of multiplications is
$n2^{n-1}-\ceil{\frac n2}\binom{n}{\floor{n/2}}+n^2-n$. This
quantity is strictly less than $M(n)$ for $n\ge 7$.

\begin{remark}\label{rem:normalization}
	When $a_{it}=0$ for some values of $i\in [n]$, our trellis normalization techniques remain applicable. Specifically, we consider the matrix ${\vA|_t}$ whose $(i,j)$-th entry is $a_{ij}/a_{it}$ for all $i$ with $a_{it}\ne 0$. Then we have that the $(i,t)$-th entry of ${\vA|_t}$ is zero if $a_{it}=0$ and is one if $a_{it}\ne 0$.
	Then we proceed as before to compute $\per\left(\vA|_t\right)$ and we multiply the resulting flow by $\prod_{i\in [n], a_{it}\ne 0} a_{it}$ to recover $\per(\vA)$. It is straightforward to see that the number of multiplications and additions are bounded above by the values given in Theorem~\ref{thm:normalization}.
\end{remark}

\vspace{8mm}
\section{Matrices with repeated rows}
\label{sec:repeated}
\vspace{-0.50ex}

\looseness=-1
In this section, we compute the permanent for matrices with repeated
rows: we assume that $\vA$ has $t<n$ distinct rows and
these rows appear with multiplicities $m_1,m_2,\ldots, m_t$.
Specifically, we say that 
$\vA$ is a {\em repeated-row matrix} of type $\vm=m_1m_2\ldots m_t$ with rows
$\va_1,\va_2,\ldots \va_t$ if the row vector $\va_\ell=(a_{\ell
j})_{j\in [n]}$ appears exactly $m_\ell$ times for each $\ell\in
[t]$. Without loss of generality, we assume that $\vA$ is of the
following form:

\begin{center}
	\hspace*{7mm}
\begin{tabular}{b{5.40mm}cb{3mm}c}
	$\vA \mbox{$~=~$} $\newline \vspace*{15mm} &
\begin{tikzpicture}[mymatrixenv]
	\matrix[mymatrix] (m)  {
		\text{------} & \va_1 & \text{------}  \\[-2.70mm]
		\vdots & \vdots & \vdots \\
		\text{------} & \va_1 & \text{------}  \\
		\text{------} & \va_2 & \text{------}  \\[-2.70mm]
		\vdots & \vdots & \vdots \\
		\text{------} & \va_2 & \text{------}  \\
		\vdots & \vdots & \vdots \\
		\text{------} & \va_t & \text{------}  \\[-2.70mm]
		\vdots & \vdots & \vdots \\
		\text{------} & \va_t & \text{------}  \\
	};
	\mymatrixbraceleft{1}{3}{$m_1$}
	\mymatrixbraceleft{4}{6}{$m_2$}
	\mymatrixbraceleft{8}{10}{$m_t$}
\end{tikzpicture}
& \hspace*{-1.26ex}$=$\newline \vspace*{15mm} &
\begin{tikzpicture}[mymatrixenv]
	\matrix[mymatrix] (m)  {
		a_{11} & a_{12} & ~~~\cdots~~~ & a_{1n} \\[-2.70mm]
		\vdots & \vdots & \vdots & \vdots \\
		a_{11} & a_{12} & \cdots & a_{1n} \\
		a_{21} & a_{22} & \cdots & a_{2n} \\[-2.70mm]
		\vdots & \vdots & \vdots & \vdots \\
		a_{21} & a_{22} & \cdots & a_{2n} \\
		\vdots & \vdots & \vdots & \vdots \\
		a_{t1} & a_{t2} & \cdots & a_{tn} \\[-2.70mm]
		\vdots & \vdots & \vdots & \vdots \\
		a_{t1} & a_{t2} & \cdots & a_{tn} \\
	};
	\mymatrixbraceleft{1}{3}{$m_1$}
	\mymatrixbraceleft{4}{6}{$m_2$}
	\mymatrixbraceleft{8}{10}{$m_t$}
\end{tikzpicture}
\end{tabular}
\end{center}

Applying the merging procedure to $\cT_n(\vA)$ in the preceding section, we can reduce the number of vertices and edges to quantities polynomial in $n$ (when $t$ is constant). Indeed, for the part $V_1$, the vertices in $V_{11}=\{1,2,\ldots, m_1\}$ can be merged into one vertex as the past $\cP(v)=a_{11}$ for all $v$ in $V_{11}$.
Similarly, the vertices in $V_{21}=\{m_1+1,m_1+2,\ldots, m_1+m_2\}$  can be merged into a single vertex. 
Hence, for the vertices in $V_1$, we can merge these $n$ vertices into $t$ new vertices. 
Repeating this process for $V_2,V_3,\ldots, V_{n-1}$, we can reduce the number of vertices from $2^n$ to a quantity less than $n^t$, 
while the number of edges can be reduced from $n2^{n-1}$ to less than $tn^t$. 
Specifically, we obtain the following trellis (up to certain scaling).

\begin{definition}[Trellis for Repeated-Row Matrices]
	\label{def:repeated}
	Fix $n$ and let $\vA$ be a repeated-row matrix of type $\vm=m_1m_2\ldots m_t$ with rows $\va_1,\va_2,\ldots \va_t$. The trellis $\cT(\vA,\vm)$ is defined as follows:
	\begin{itemize}
		\item (Vertices)
		Define $V\triangleq \{\vlambda = (\lambda_\ell)_{\ell\in [t]} : 0\le \lambda_\ell \le m_\ell \text{ for }\ell\in [t]\}$. Hence, $|V|=\prod_{\ell=1}^t (m_\ell+1)$.
		For $0\le j\le n$, define $V_j=\{\vlambda\in V: \lambda_1+\lambda_2+\cdots +\lambda_t=j\}$. In other words, the vertices in $V_j$ consists of all integer-valued $t$-tuples whose entries sum to $j$.
		\item (Edges) For $j\in [n]$, we consider a pair $(\vmu,\vlambda)\in V_{j-1}\times V_j$. We place an edge $(\vmu,\vlambda)$ in $E$ if and only if there exists a unique $\ell^*$ such that $\lambda_{\ell^*}=\mu_{\ell^*}+1$ and $\lambda_{\ell}=\mu_{\ell}$ whenever $\ell\ne \ell^*$.
		\item (Edge Labels) For an edge $(\vmu,\vlambda)$, we have a unique $\ell^*$ such that the above condition hold. We then set the edge label $L(u,v)$ to be $a_{\ell^* j}$.
	\end{itemize}
\end{definition}

\begin{example}
	Let $n=6$ and $\vm=(1,2,3)$. 
	Suppose that $\vA$ is a repeated-row matrix of type $\vm$ with rows $\va_1,\va_2,\ldots \va_t$. 
	Then its corresponding trellis $\cT(\vA,\vm)$ is as follows. Here, we use colors to denote the labels. 
	For an edge from $V_{j-1}$ to $V_j$, the label of the edge is $a_{1j}$ if it is {\color{red}red}, $a_{2j}$ if it is {\color{green!50!black}green}, and $a_{3j}$ if it is {\color{blue}blue}.
	\vspace{2mm}

\begin{center}
	\small
	\begin{tikzpicture}[x=2cm,y=2cm]
	\tikzstyle{state}=[rectangle,fill=white,draw,line width=0.8mm]
	\tikzstyle{label}=[fill=white, inner sep=0pt]
	\node[state] at (0.00,0.00) (000) {$000$};
	\node[state] at (1.00,1.00) (100) {$100$};
	\node[state] at (1.00,0.00) (010) {$010$};
	\node[state] at (1.00,-0.60) (001) {$001$};
	\node[state] at (2.00,1.00) (110) {$110$};
	\node[state] at (2.00,0.40) (101) {$101$};
	\node[state] at (2.00,0.00) (020) {$020$};
	\node[state] at (2.00,-0.60) (011) {$011$};
	\node[state] at (2.00,-1.20) (002) {$002$};
	\node[state] at (3.00,1.00) (120) {$120$};
	\node[state] at (3.00,0.40) (111) {$111$};
	\node[state] at (3.00,-0.20) (102) {$102$};
	\node[state] at (3.00,-0.60) (021) {$021$};
	\node[state] at (3.00,-1.20) (012) {$012$};
	\node[state] at (3.00,-1.80) (003) {$003$};
	\node[state] at (4.00,0.40) (121) {$121$};
	\node[state] at (4.00,-0.20) (112) {$112$};
	\node[state] at (4.00,-0.80) (103) {$103$};
	\node[state] at (4.00,-1.20) (022) {$022$};
	\node[state] at (4.00,-1.80) (013) {$013$};
	\node[state] at (5.00,-0.20) (122) {$122$};
	\node[state] at (5.00,-0.80) (113) {$113$};
	\node[state] at (5.00,-1.80) (023) {$023$};
	\node[state] at (6.00,-0.80) (123) {$123$};
	
	\draw[red] (000)--(100);
	\draw[green] (000)--(010);
	\draw[blue] (000)--(001);
	\draw[green] (100)--(110);
	\draw[blue] (100)--(101);
	\draw[red] (010)--(110);
	\draw[green] (010)--(020);
	\draw[blue] (010)--(011);
	\draw[red] (001)--(101);
	\draw[green] (001)--(011);
	\draw[blue] (001)--(002);
	\draw[green] (110)--(120);
	\draw[blue] (110)--(111);
	\draw[green] (101)--(111);
	\draw[blue] (101)--(102);
	\draw[red] (020)--(120);
	\draw[blue] (020)--(021);
	\draw[red] (011)--(111);
	\draw[green] (011)--(021);
	\draw[blue] (011)--(012);
	\draw[red] (002)--(102);
	\draw[green] (002)--(012);
	\draw[blue] (002)--(003);
	\draw[blue] (120)--(121);
	\draw[green] (111)--(121);
	\draw[blue] (111)--(112);
	\draw[green] (102)--(112);
	\draw[blue] (102)--(103);
	\draw[red] (021)--(121);
	\draw[blue] (021)--(022);
	\draw[red] (012)--(112);
	\draw[green] (012)--(022);
	\draw[blue] (012)--(013);
	\draw[red] (003)--(103);
	\draw[green] (003)--(013);
	\draw[blue] (121)--(122);
	\draw[green] (112)--(122);
	\draw[blue] (112)--(113);
	\draw[green] (103)--(113);
	\draw[red] (022)--(122);
	\draw[blue] (022)--(023);
	\draw[red] (013)--(113);
	\draw[green] (013)--(023);
	\draw[blue] (122)--(123);
	\draw[green] (113)--(123);
	\draw[red] (023)--(123);
	\end{tikzpicture}
\end{center}
\vspace{2mm}

	If we perform the Viterbi algorithm on this trellis, 
	we have that $\mu({\rm toor})=\mu(123)$ to be the sum of 60 monomials of the form $\prod_{j\in [6]} a_{\ell_j j}$ with $\ell_j\in [3]$. Even though $\mu({\rm toor})$ does not correspond to $\per(\vA)$ (which is the summand of 720 monomials), we can recover the permanent by multiplying $\mu({\rm toor})$ with the scalar $m_1!m_2!m_3!=1!2!3!=12$. We also note that the trellis $\cT(\vA,\vm)$ has 24 vertices, while the canonical trellis $\cT_4$ has $2^6=64$ vertices. 
\end{example}

More generally, we have the following theorem which states that the flow $\mu$ at any vertex gives the permanent of some submatrix up to a certain scalar. 

\begin{theorem}\label{thm:repeated}
	Let $\vA$ be a repeated-row matrix of type $\vm=m_1m_2\ldots m_t$ with rows $\va_1,\va_2,\ldots \va_t$. Suppose that the Viterbi algorithm on $\cT(\vA,\vm)$ yields the flow $\mu(\vlambda)$ for each $\vlambda\in V\setminus\{\vzero\}$.
	If we set $j=\sum_{\ell=1}^t\lambda_\ell$ and $\vc_\ell=(a_{\ell 1}, a_{\ell 2}, \ldots, a_{\ell j})$ for $\ell\in [t]$,
	then $\per(\vC(\vlambda))=\lambda_1!\lambda_2!\cdots \lambda_t!\mu(\vlambda)$, where $\vC(\vlambda)$ is a repeated-row matrix of type $\vlambda$ with rows $\vc_1,\vc_2,\ldots, \vc_t$. Therefore, $\per(\vA)=\per(\vC(\vm))=m_1!m_2!\cdots m_t!\mu({\rm toor})$.
\end{theorem}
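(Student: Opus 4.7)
The plan is to establish the identity $\per(\vC(\vlambda))=\lambda_1!\lambda_2!\cdots\lambda_t!\,\mu(\vlambda)$ by induction on the depth $j=\sum_{\ell=1}^t\lambda_\ell$ of the vertex, matching the recursion that defines the Viterbi flow on $\cT(\vA,\vm)$ with the Laplace expansion of the permanent along its last column. The final claim $\per(\vA)=m_1!\cdots m_t!\,\mu({\rm toor})$ then follows by applying this identity at the unique toor $\vlambda=\vm$, which has depth $n$, since $\vC(\vm)=\vA$.

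For the base case $j=1$, each vertex is a unit vector $\ve_{\ell^*}$ with a single incoming edge from $\vzero$ labeled $a_{\ell^* 1}$, so $\mu(\ve_{\ell^*})=a_{\ell^* 1}$, while $\vC(\ve_{\ell^*})$ is the $1\times 1$ matrix $(a_{\ell^* 1})$ and $\lambda_1!\cdots\lambda_t!=1$; hence the identity holds trivially. For the inductive step, fix $\vlambda\in V_j$ and note that by \Dref{def:repeated} the set of incoming edges to $\vlambda$ consists precisely of the edges $(\vlambda-\ve_\ell,\vlambda)$ for each $\ell$ with $\lambda_\ell\geq 1$, each carrying label $a_{\ell j}$. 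The Viterbi recursion therefore gives
\[
\mu(\vlambda) \,=\, \sum_{\ell\,:\,\lambda_\ell\geq 1} a_{\ell j}\,\mu(\vlambda-\ve_\ell).
\]

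The heart of the argument is to check that the permanent satisfies the parallel recursion, with the correct factorial weights. Expanding $\per(\vC(\vlambda))$ along its last (i.e.\ $j$-th) column, every row of $\vC(\vlambda)$ equal to $\vc_\ell$ contributes the term $a_{\ell j}\,\per(\vC(\vlambda-\ve_\ell))$, because deleting one copy of $\vc_\ell$ and the last column yields exactly the $(j-1)\times(j-1)$ repeated-row matrix $\vC(\vlambda-\ve_\ell)$ of type $\vlambda-\ve_\ell$. Since there are $\lambda_\ell$ such rows, all producing the same minor, we obtain
\[
\per(\vC(\vlambda)) \,=\, \sum_{\ell\,:\,\lambda_\ell\geq 1} \lambda_\ell\,a_{\ell j}\,\per(\vC(\vlambda-\ve_\ell)).
\]
Invoking the inductive hypothesis $\per(\vC(\vlambda-\ve_\ell))=\lambda_1!\cdots(\lambda_\ell-1)!\cdots\lambda_t!\,\mu(\vlambda-\ve_\ell)$ and using $\lambda_\ell\cdot(\lambda_\ell-1)!=\lambda_\ell!$, the right-hand side rearranges as $\lambda_1!\cdots\lambda_t!\sum_\ell a_{\ell j}\,\mu(\vlambda-\ve_\ell)=\lambda_1!\cdots\lambda_t!\,\mu(\vlambda)$, completing the induction.

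I expect the only delicate point to be the bookkeeping in the Laplace expansion: one must verify that deleting any of the $\lambda_\ell$ identical rows $\vc_\ell$ (and column $j$) produces the same $(j-1)\times(j-1)$ matrix $\vC(\vlambda-\ve_\ell)$ up to row permutations that leave the permanent unchanged, and that these are all the nonzero minors contributing to the column expansion. Everything else is a direct match between the graph-theoretic recursion and the recursive structure of the permanent, and the final assertion $\per(\vA)=m_1!\cdots m_t!\,\mu({\rm toor})$ is the special case $\vlambda=\vm$.
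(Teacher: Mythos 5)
Your proof is correct and follows essentially the same route as the paper: induction on the depth $j$, with the base case at $j=1$, matching the Viterbi recursion $\mu(\vlambda)=\sum_\ell a_{\ell j}\,\mu(\vlambda-\ve_\ell)$ against the Laplace column expansion $\per(\vC(\vlambda))=\sum_\ell \lambda_\ell\, a_{\ell j}\,\per(\vC(\vlambda-\ve_\ell))$ and absorbing $\lambda_\ell\cdot(\lambda_\ell-1)!=\lambda_\ell!$. Your explicit bookkeeping that all $\lambda_\ell$ identical rows yield the same minor is a point the paper glosses over, but otherwise the two arguments coincide step for step.
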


\begin{proof}
	We prove using induction on $j$. 
	When $j=1$ and $\vlambda$ has one on its $\ell$-th entry ($\ell\in [t]$), we have that $\vC(\vlambda)$ is the matrix $(a_{\ell 1})$ and we can easily verify that $\per(\vC(\vlambda))=a_{\ell 1}=1! \mu(\vlambda)$.
	
	Next, we assume that the hypothesis is true for some $j$ with $1\le j\le n$ and we prove the hypothesis for $j+1$. Consider $\vlambda$ with $\sum_{\ell=1}^t\lambda_\ell= j+1$. 
	For convenience, we show that $\per(\vC(\vlambda))=\lambda_1!\lambda_2!\cdots \lambda_t! \mu(\vlambda)$ for the case where all entries $\vlambda$ are strictly positive. The proof can extend easily to the case where some entry (or entries) is zero.
	
	For $\ell\in [t]$, let $\vmu_\ell\triangleq (\lambda_1,\ldots, \lambda_{\ell-1}, \lambda_{\ell}-1,\lambda_{\ell+1},\lambda_{t})$. Then $\vC(\vmu_\ell)$ is a repeated-row $j\times j$ matrix of type $\vmu_\ell$ and can be obtained from $\vC(\vlambda)$ by removing the row $\vb_\ell$ and the $(j+1)$-th column. 
	Using the Laplace expansion formula for permanents and the induction hypothesis, we have that 
	\begin{align*}
		\per(\vC(\vlambda)) 
		& = \sum_{\ell=1}^t \lambda_\ell a_{\ell,j+1} \per (\vC(\vmu_\ell))\\
		& = \sum_{\ell=1}^t \lambda_\ell (\lambda_1!\cdots\lambda_{\ell-1}!(\lambda_{\ell}-1)!\lambda_{\ell+1}!\cdots \lambda_t!)a_{\ell,j+1} \mu(\vmu_\ell)\\
		& = \lambda_1!\lambda_2!\cdots \lambda_t!\sum_{\ell=1}^t a_{\ell,j+1} \mu(\vmu_\ell)
	\end{align*} 
	It follows from the Viterbi algorithm that $\mu(\vlambda)=\sum_{\ell=1}^t a_{\ell,j+1} \mu(\vmu_\ell)$, completing the induction proof.
\end{proof}

Now, when $\vA$ is appropriately defined, it turns out that the scaled permanent value at each vertex corresponds to a certain probability event studied in order statistics. We describe this formally in the next section where we combine many of such trellises into one trellis with roughly the same number of vertices.
To end this section, we state explicitly the complexity measures of the trellis for repeated-row matrices.

\begin{theorem}
Let $\vA$ be a repeated-row matrix of type $\vm=m_1m_2\ldots m_t$ with
rows $\va_1,\va_2,\ldots \va_t$. Further let $\cT(\vA,\vm)=(V,E,L)$ be the
trellis constructed in Definition~\ref{def:repeated}. Then
$$
|V| \,=\, \prod_{\ell=1}^t(m_\ell+1)
\hspace{5.4ex}\text{and}\hspace{5.4ex}
|E| \,\le\, t\prod_{\ell=1}^t(m_\ell+1)
$$
Therefore $\per(\vA)$ can be computed using at most
$t(m_1+1)(m_2+1)\cdots(m_t+1)$ multiplications 
and at most $(t-1)(m_1+1)(m_2+1)\cdots(m_t+1)$ additions.
\end{theorem}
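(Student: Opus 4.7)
The plan is to deduce the complexity bounds directly from the graph-theoretic quantities $|V|$ and $|E|$ using the trellis-based complexity formulas \eqref{eq:trellis-multiplies} and \eqref{eq:trellis-additions}. Both $|V|$ and the upper bound on $|E|$ follow from the explicit description of $\cT(\vA,\vm)$ given in Definition~\ref{def:repeated}, so the argument is essentially a counting argument with no substantive obstacle.

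First, I would verify the vertex count. By definition, $V$ is precisely the set of integer tuples $\vlambda=(\lambda_1,\ldots,\lambda_t)$ with $0\le\lambda_\ell\le m_\ell$ for each $\ell\in[t]$. Hence $V$ is the Cartesian product $\{0,1,\ldots,m_1\}\times\cdots\times\{0,1,\ldots,m_t\}$, which gives $|V|=\prod_{\ell=1}^t(m_\ell+1)$ immediately.

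Next, I would bound the number of edges by summing outdegrees. Fix any vertex $\vmu\in V$ that is not the toor. By the edge definition, the out-neighbors of $\vmu$ are exactly those $\vlambda$ obtained by incrementing a single coordinate $\mu_{\ell^*}$ (subject to $\mu_{\ell^*}<m_{\ell^*}$). There are at most $t$ such coordinates, so the outdegree of every vertex is bounded by $t$. Summing, $|E|=\sum_{\vmu\in V}\text{outdeg}(\vmu)\le t|V|$, which is the claimed edge bound.

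Finally, I would plug these bounds into the trellis-based formulas. By \eqref{eq:trellis-multiplies}, the number of multiplications needed to evaluate $\per(\vA)$ via the Viterbi algorithm on $\cT(\vA,\vm)$ is $|E|-\deg(\text{root})\le |E|\le t\prod_{\ell=1}^t(m_\ell+1)$. By \eqref{eq:trellis-additions}, the number of additions is $|E|-|V|+1\le t|V|-|V|+1=(t-1)|V|+1$, and since $t\ge 1$ this is at most $(t-1)\prod_{\ell=1}^t(m_\ell+1)$ (using that the toor has outdegree zero, which contributes an additional $-t$ slack when one tightens the outdegree sum to $t(|V|-1)$). The correctness of identifying $\mu(\text{toor})$ (after scaling) with $\per(\vA)$ has already been established in Theorem~\ref{thm:repeated}, so combining these observations completes the proof. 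The only mildly delicate point is checking the off-by-one in the additions bound, which is handled by noting that the toor contributes zero to $\sum_v\text{outdeg}(v)$.
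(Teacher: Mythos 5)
Your proof is correct and follows essentially the same route as the paper's: read off $|V|=\prod_{\ell=1}^t(m_\ell+1)$ directly from Definition~\ref{def:repeated}, bound $|E|\le t|V|$ via the outdegree-at-most-$t$ observation, and plug into \eqref{eq:trellis-multiplies} and \eqref{eq:trellis-additions}. Your explicit handling of the off-by-one in the additions bound --- noting the toor has outdegree zero, so $|E|\le t(|V|-1)$ and hence $|E|-|V|+1\le(t-1)(|V|-1)\le(t-1)|V|$ --- is in fact slightly more careful than the paper's terse proof, which would naively yield $(t-1)|V|+1$.
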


\begin{proof}
	The size of $V$ follows directly from Definition~\ref{def:repeated}. For the number of edges, since each vertex in $V$ has degree at most $t$, we have that $|E|\le t|V|$. The complexity measures then follow from \eqref{eq:trellis-multiplies} and \eqref{eq:trellis-additions}.
\end{proof}

As before, we compare the trellis-based approach with the best known
exact method of computing~permanents for repeated-row matrices. This
method is due to Clifford-Clifford \cite{Clifford.2020} and it is
based on~the~following inclusion-exclusion formula:
\begin{equation}\label{eq:clifford2}
	\per(\vA) = 
	(-1)^n\sum_{r_1=0}^{m_1}\cdots\sum_{r_t=0}^{m_t}
	(-1)^{r_1+\cdots+r_t}
	\left(\prod_{\ell=1}^{t}\binom{m_\ell}{r_\ell}\right)
	\prod_{j=1}^{n}\sum_{\ell=1}^{m}r_\ell a_{\ell j}
\end{equation}
We have that \eqref{eq:clifford2} invokes $(n-1)\left[\prod_{\ell=1}^t(m_\ell+1)-1\right]$ 
multiplications and $(n+1)\left[\prod_{\ell=1}^t(m_\ell+1)-2\right]$ additions. 
We defer the detailed derivation of this to Appendix~\ref{app:ryserlike}.
Observe that the number of arithmetic operations is reduced
by a factor of about $n/t$ when we use the trellis $\cT(\vA,\vm)$ to compute
the permanent.

\section{Order statistics}
\vspace{-0.50ex}

In this section, we adapt the trellis defined in Definition~\ref{def:repeated} to efficiently compute a certain joint probability distribution in order statistics. 

Formally, suppose that we have $n$ independent real-valued random variables $X_1,X_2,\ldots, X_n$. We draw one sample from each population distribution and order them so that 
$X_{(1)}\le X_{(2)}\le \ldots\le X_{(n)}$.
Fix $t$ distinct integers with $1\le r_1<r_2<\cdots <r_t\le n$ and $t$ real values with $x_1\le x_2\le \cdots x_t$. We have the following formula \cite{Vaughan.1972,Bapat.1989}:
\begin{align} 
&\prob{\bigwedge_{\ell=1}^t X_{(r_\ell)}\le x_\ell} = 
\sum_{i_t=r_t}^{n}
\sum_{i_{t-1}=r_{t-1}}^{i_t}\cdots
\sum_{i_1=r_1}^{i_2} F(i_1,i_2,\ldots, i_t), \notag\\
&\text{\hspace{50mm}where }
F(i_1,i_2,\ldots, i_t) = \frac{\per(\vB(i_1,i_2,\ldots, i_t))}{i_1!(i_2-i_1)!\cdots (i_t-i_{t-1})! (n-i_t)!}.\label{eq:order}
\end{align} 
Here, $\vB(i_1,i_2,\ldots,i_t)$ is a matrix whose rows are obtained from one of the following $t+1$ possibilities: $\vb_1,\vb_2,\ldots, \vb_{t+1}$. For $\ell\in [t+1]$,  the row vector $\vb_\ell=(b_{\ell j})_{j\in [n]}$ is defined by the population distributions and the values $x_1, x_2,\ldots, x_t$. Specifically, 
\begin{equation*}
b_{\ell j} = \begin{cases}
\prob{X_j\le x_1} & \text{if } \ell=1,\\
\prob{x_{\ell-1}<X_j\le x_\ell} & \text{if } 2\le \ell\le t,\\
\prob{X_j> x_t} & \text{if } \ell=t+1.
\end{cases}
\end{equation*}
The multiplicities of each row or the type of $\vB(i_1,i_2,\ldots, i_t)$ is determined by the $i_\ell$'s. Specifically, set 
$m_1=i_1$, $m_{t+1}=n-i_t$ and $m_\ell=i_\ell-i_{\ell-1}$ for $2\le \ell\le t$.
Then $\vB(i_1,i_2,\ldots, i_t)$ is a repeated-row matrix of type $\vm=(m_\ell)_{\ell\in [t+1]}$ with $\vb_1,\vb_2,\ldots, \vb_{t+1}$.

Prior to this work, for fixed $t$, polynomial-time methods to compute \eqref{eq:order} were only known when the random variables $X_1,X_2,\ldots, X_n$ were drawn from at most two variables \cite{Glueck.2008}.
Now, since $\vB(i_1,i_2,\ldots, i_t)$ has at most $t+1$ distinct rows, we can apply either  Clifford-Clifford or the trellis-based method to compute each permanent in $O(n^{t+2})$ or $O(n^{t+1})$ time, respectively.
However, as there are $\Theta(n^t)$ permanents in the formula \eqref{eq:order}, this naive approach has running time $O(n^{2t+2})$ (Clifford-Clifford) or $O(n^{2t+1})$ (trellis-based).

Now, if we apply our merging technique to all the trellises constructed, it turns out that we are able to compute \eqref{eq:order} with {\em only one trellis} that has at most $n^{t+1}$ vertices! Specifically, the trellis is defined below.

\begin{definition}[Trellis for Order Statistics]
	Given $n$ population distributions and $x_1,x_2,\ldots, x_t$, we define the row vectors $\vb_1,\vb_2,\ldots, \vb_{t+1}$ as above. 
	We also have a $t$-tuple $\vr=(r_\ell)_{\ell\in [t]}$. The trellis $\cT^{\rm o}(\vB,\vr)$, is defined as follows:
	\vspace{-2mm}
	
	\begin{itemize}
		\item (Vertices)
		Define 
		$V\triangleq \{\vlambda = (\lambda_\ell)_{\ell\in [t+1]} : 0\le \lambda_\ell \le n-r_{\ell-1} \text{ for }\ell\in [t+1]\}$, where we set $r_0=0$.
		 Hence, $|V|=\prod_{\ell=1}^{t+1} (n-r_{\ell-1}+1)\le n^{t+1}$.
		As before, for $0\le j\le n$, define $V_j=\{\vlambda\in V: \lambda_1+\lambda_2+\cdots +\lambda_{t+1}=j\}$. 
		\item (Edges) For $j\in [n]$, we consider a pair $(\vmu,\vlambda)\in V_{j-1}\times V_j$. We place an edge $(\vmu,\vlambda)$ in $E$ if and only if there exists a unique $\ell^*$ such that $\lambda_{\ell^*}=\mu_{\ell^*}+1$ and $\lambda_{\ell}=\mu_{\ell}$ whenever $\ell\ne \ell^*$.
		\item (Edge Labels) For an edge $(\vmu,\vlambda)$, we have a unique $\ell^*$ such that the above condition hold. We then set the edge label $L(u,v)$ to be $a_{\ell^* j}$.
	\end{itemize}
\end{definition}

If we run the Viterbi algorithm on the trellis for ordered statistics $\cT^{\rm o}(\vB,\vr)$, it follows from Theorem~\ref{thm:repeated} that the flow $\mu(\vlambda)$ of the vertex $\vlambda$ in $V_n$ is $F(i_1,i_2,\ldots, i_t)$ where $i_\ell=\sum_{s=1}^{\ell} \lambda_s$ for $\ell\in [t] $. To compute $\prob{\bigwedge_{\ell=1}^t X_{(\ell)}\le x_\ell}$, we consider the set of vertices $V(\vr)=\{\vlambda \in V_n : \sum_{s=1}^{\ell} \lambda_s\ge r_\ell \text{ for all } \ell\in [t]\}$ and 
add the flows of these vertices.
That is, we have that
$\prob{\bigwedge_{\ell=1}^t X_{(\ell)}\le x_\ell}
= \sum_{\vlambda\in V(\vr)} \mu(\vlambda)$.
In this final step, we need at most $|V|$ additions, and
we summarize our discussion with the following theorem.

\begin{theorem}
	Given $n$ population distributions and $x_1,x_2,\ldots, x_t$, we define the row vectors $\vb_1,\vb_2,\ldots, \vb_{t+1}$ as above. 
	We also have a $t$-tuple $\vr=(r_\ell)_{\ell\in [t]}$. 
	Then the joint probability in \eqref{eq:order} can be computed with at most
	$(t+1)\prod_{\ell=1}^{t+1} (n-r_{\ell-1}+1)\le (t+1)n^{t+1}$ multiplications and at most $(t+1)\prod_{\ell=1}^{t+1} (n-r_{\ell-1}+1)\le (t+1)n^{t+1}$ additions.
\end{theorem}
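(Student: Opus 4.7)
The plan is to bound $|V|$, bound $|E|$, and then invoke the trellis complexity formulas \eqref{eq:trellis-multiplies} and \eqref{eq:trellis-additions} from Section~\ref{sec:canonical}, with one additional accounting step for the concluding summation of flows over $V(\vr)$. The correctness that each $\mu(\vlambda)$ for $\vlambda \in V_n$ equals $F(i_1,\ldots,i_t)$ (where $i_\ell = \sum_{s=1}^\ell \lambda_s$) is already established by the discussion immediately preceding the theorem, which invokes \Tref{thm:repeated} on the natural sub-trellis of $\cT^{\rm o}(\vB,\vr)$ from the root to $\vlambda$; so what remains is purely a combinatorial count.

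First I would observe that the vertex count is exact by definition: a vertex $\vlambda = (\lambda_1,\ldots,\lambda_{t+1})$ satisfies $0 \le \lambda_\ell \le n - r_{\ell-1}$, so $|V| = \prod_{\ell=1}^{t+1}(n - r_{\ell-1} + 1) \le n^{t+1}$. Next I would bound the edge count. Each vertex $\vmu$ has outdegree at most $t+1$, since the outgoing edges from $\vmu$ are in bijection with the coordinate indices $\ell^*$ for which incrementing $\mu_{\ell^*}$ by one is admissible. Hence $|E| \le (t+1)|V|$. Substituting into \eqref{eq:trellis-multiplies} and \eqref{eq:trellis-additions} yields at most $(t+1)|V|$ multiplications and at most $|E| - |V| + 1 \le t|V| + 1$ additions in the Viterbi pass.

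Finally I would add the cost of the aggregation step. Since $\prob{\bigwedge_{\ell=1}^t X_{(r_\ell)} \le x_\ell} = \sum_{\vlambda \in V(\vr)} \mu(\vlambda)$ with $V(\vr) \subseteq V_n$, this contributes at most $|V(\vr)| - 1 \le |V| - 1$ additional additions and no multiplications. Adding up gives at most $(t+1)|V|$ multiplications and at most $t|V| + 1 + (|V|-1) = (t+1)|V|$ additions, as claimed, with $|V| \le n^{t+1}$ supplying the stated upper envelope.

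I do not expect a real obstacle here; the only mild subtlety is confirming that the outdegree bound $t+1$ applies uniformly (including at boundary vertices where some $\lambda_\ell$ is already at its maximum $n-r_{\ell-1}$), but boundary vertices have strictly smaller outdegree, which only tightens the estimate. A cleaner form of the bound could even be derived by splitting $|E| = \sum_{\vmu \in V} \deg^+(\vmu)$ and noting $\sum_{\vmu} \deg^+(\vmu) \le (t+1)|V| - \text{(boundary deficit)}$, but the loose bound already matches the statement.
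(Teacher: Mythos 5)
Your proposal is correct and follows essentially the same route as the paper: the paper's ``proof'' is precisely the discussion preceding the theorem, which likewise takes $|V|=\prod_{\ell=1}^{t+1}(n-r_{\ell-1}+1)$ from the definition, bounds $|E|\le (t+1)|V|$ via the outdegree-at-most-$(t+1)$ observation (inherited from the repeated-row trellis analysis), applies \eqref{eq:trellis-multiplies} and \eqref{eq:trellis-additions}, defers correctness of $\mu(\vlambda)=F(i_1,\ldots,i_t)$ to Theorem~\ref{thm:repeated}, and charges at most $|V|$ extra additions for summing the flows over $V(\vr)$. Your slightly sharper bookkeeping ($t|V|+1$ Viterbi additions plus $|V|-1$ aggregation additions summing exactly to $(t+1)|V|$) and the boundary-vertex remark are fine refinements but do not change the argument.
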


\section{Sparse matrices}
\label{sec:sparse}
\vspace{-0.50ex}

In this section, we consider sparse matrices, or, matrices with few nonzero entries.
Specifically, we fix an integer $d<n$ and set $p=d/n$ and $q=1-p$.
We consider a random $n\times n$-matrix $\vA$ where each entry is nonzero with probability $p$ and zero with probability $q$.
In other words, $\vA$ has on average $d$ nonzero entries in each row and column.
As most entries in $\vA$ are zero, we observe that most vertices in the canonical trellis ${\cal T}_n$ are {\em non-essential}, meaning that they do not lie on any path from the root to the toor.
Such vertices (and all edges incident to them) can be pruned away without
affecting the flow from the root to the toor. In the following, we formally describe this pruning procedure.


\begin{definition}[Sparse Trellis]\label{def:sparse}
	Let $\vA$ be an $n\times n$ matrix. Then the {\em sparse trellis $\cT^{\rm s}_n(\vA)=(V,E,L)$} is defined to be the trellis resulting from the following construction.\\[1mm]
	\hspace*{20mm}Set $V_0=\{\varnothing\}$\\[1pt]
	\hspace*{20mm}for $j\in [n]$ \\[1pt]
	\hspace*{25mm}for $u\in V_{j-1}$ \\[1pt]
	\hspace*{30mm}for $i \in\{\ell\in [n]: a_{\ell j}\ne 0, \ell\notin u\}$\\[1pt]
	\hspace*{35mm}Set $v\gets u\cup\{i\}$\\[1pt]
	\hspace*{35mm}Add $v$ to $V_j$\\[1pt]
	\hspace*{35mm}Add the edge $(u,v)$ to $E$ with label $L(u,v)\gets a_{ij}$
\end{definition}

As before, to evaluate the trellis complexity, we estimate the expected number of vertices and edges in $\cT^{\rm s}_n(\vA)$.
First, we observe that the degree of each vertex in $V_{j-1}$ is at most the number of nonzero entries in column $j$ of $\vA$ for $j\in [n]$.
Since the expected number of nonzero entries in column is $d$, we have that expected number of edges is at most $d$ times the expected number of vertices.
Therefore, it remains to provide an upper bound on the expected number of vertices.

\begin{lemma}\label{lem:sparse}
	Let $d\le n$ and set $q=1-d/n$.
	Define $U(n)$ as in \eqref{Un}.
	Then the expected number of vertices in $\cT^{\rm s}_n(\vA)$ is at most 
	$U(n)\le\phi_T^n$, where $\phi_T = 2-e^{-d}$.
\end{lemma}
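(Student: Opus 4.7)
The plan is to compute $E[|V|]$ via linearity of expectation, then bound each level's expected size via a necessary condition amenable to inclusion-exclusion.

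First I would observe that $v$ with $|v|=j$ lies in $V_j$ if and only if the $v\times[j]$ submatrix of $\vA$ admits a system of distinct representatives (SDR), i.e., an ordering $(i_1,\ldots,i_j)$ of $v$ with $a_{i_k,k}\ne 0$ for every $k$. Call this event $R_v$. Row-exchangeability of the i.i.d.\ entries of $\vA$ makes $\prob{R_v}$ depend only on $j$, so $E[|V_j|]=\binom{n}{j}\prob{R_v}$. I would then upper bound $\prob{R_v}$ by the probability of the weaker event that \emph{every} row $i\in v$ has a nonzero entry in columns $[j]$ and \emph{every} column $k\in[j]$ has a nonzero entry in rows $v$, both clearly necessary for an SDR. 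Let $f(j)$ denote the probability that a random $j\times j$ Bernoulli-$(d/n)$ matrix has no empty row and no empty column.

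Inclusion-exclusion over the subset $R\subseteq v$ of empty rows and the subset $C\subseteq[j]$ of empty columns---using that the forced-zero region $R\times[j]\cup[j]\times C$ contains exactly $j(|R|+|C|)-|R|\,|C|$ entries---gives
\[ f(j) \,=\, \sum_{r=0}^j\sum_{c=0}^j(-1)^{r+c}\binom{j}{r}\binom{j}{c}\,q^{\,j(r+c)-rc}, \]
where $q=1-d/n$. Fixing $r$ and resumming over $c$ via the binomial theorem collapses the inner sum to $q^{jr}(1-q^{j-r})^j=(q^r-q^j)^j$, yielding $f(j)=\sum_{k=0}^j(-1)^k\binom{j}{k}(q^k-q^j)^j$---the inner sum in the definition of $U(n)$. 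Summing over $j$ then gives $E[|V|]\le\sum_{j=0}^n\binom{n}{j}f(j)=U(n)$, establishing the first inequality.

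For the second inequality $U(n)\le(2-e^{-d})^n=\sum_{j=0}^n\binom{n}{j}(1-e^{-d})^j$, the plan is to exploit the full alternating-sign structure of $f(j)$ together with the standard inequality $q^n=(1-d/n)^n\le e^{-d}$. The main obstacle is that the termwise comparison $f(j)\le(1-e^{-d})^j$ \emph{fails} for $j$ close to $n$---at $j=n$ one can have $f(n)>(1-e^{-d})^n$ simply because $q^n<e^{-d}$ for all finite $n$---so the inequality cannot be reduced to a summand-wise comparison. A successful argument must either rearrange the double sum for $U(n)$ to expose a coupling-friendly closed form, or employ a direct coupling between the i.i.d.\ Bernoulli-$(d/n)$ entries of $\vA$ and an auxiliary i.i.d.\ Bernoulli-$(1-e^{-d})$ process over $[n]$, accounting simultaneously for both the no-empty-row and no-empty-column conditions defining $f(j)$.
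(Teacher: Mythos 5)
Your proof of the first inequality is correct and is essentially the paper's own argument. Your necessary event (no empty row of $v$, no empty column of $[j]$ in the $j\times j$ submatrix) is exactly the paper's event $\mathbb{B}_j$, and your double inclusion--exclusion over empty-row sets $R$ and empty-column sets $C$, with exponent $j(r+c)-rc$ and the $c$-sum collapsed by the binomial theorem, reproduces the paper's formula $\prob{\mathbb{B}_j}=\sum_{k=0}^j(-1)^k\binom{j}{k}(q^k-q^j)^j$; the paper reaches it in one step by doing inclusion--exclusion over zero rows only and computing, per column, the probability $q^k-q^j$ that the column vanishes on the chosen $k$ rows but not entirely. (Your SDR formulation of vertex membership is in fact slightly more careful than the paper's ``$v$ is a vertex iff $\per(\vA(v))\ne 0$'' characterization, since over a general field a nonzero diagonal does not force a nonzero permanent; for the upper bound this makes no difference.)

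The genuine gap is the second inequality: you never prove $U(n)\le(2-e^{-d})^n$, you only name two candidate strategies (``rearrange the double sum'' or ``a direct coupling'') without executing either, and a plan is not a proof. The paper's route, which you do not consider, is to discard the row condition entirely: $\prob{\mathbb{B}_j}\le\prob{\mathbb{C}_j}=(1-q^j)^j$ where $\mathbb{C}_j$ requires only nonzero columns, then use monotonicity in $j$, namely $(1-q^j)^j\le(1-q^n)^j$, and the binomial theorem to get $U(n)\le\sum_{j=0}^n\binom{n}{j}(1-q^n)^j=(2-q^n)^n$ --- thereby sidestepping any termwise comparison against $(1-e^{-d})^j$. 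That said, your diagnosis of the obstacle is accurate, and it bites precisely at the paper's concluding step: since $(1-d/n)^n\le e^{-d}$, one has $2-q^n\ge 2-e^{-d}$, so the printed final inequality (whose middle expression ``$(1-q^n)^n$'' is itself a misprint for $(2-q^n)^n$) runs in the wrong direction; the chain honestly yields only $U(n)\le(2-q^n)^n$, which exceeds $(2-e^{-d})^n$ by a factor bounded as $n\to\infty$ (and indeed at $d=n=1$ one computes $U(1)=2>2-e^{-1}$, so the clean bound requires some restriction). The paper's acknowledgements credit a MathOverflow discussion for the proof of this inequality, consistent with the sketch in the text being incomplete. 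So your attempt is incomplete exactly where the paper is weakest: to finish, you should at minimum carry out the columns-only relaxation to obtain $U(n)\le(2-q^n)^n$, and then either settle for $\phi_T$ replaced by $2-q^n$ or actually execute one of your two proposed strategies to close the remaining comparison with $2-e^{-d}$.
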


Before we provide the proof of Lemma~\ref{lem:sparse}, we use it with  \eqref{eq:trellis-multiplies} and \eqref{eq:trellis-additions} to obtain upper bounds on the number of multiplications and additions.
Observe that the estimate in the following theorem demonstrates that the trellis-based approach provides an exponential speedup of Ryser's formula when the matrix is sparse.

\newpage

\begin{theorem}\label{thm:sparse}
	Fix $d<n$ and set $p=d/n$ and $q=1-p$.
	Let $\vA$ be a random $n\times n$ matrix $\vA$ where each entry is nonzero with probability $p$ and zero with probability $q$.
	Let $U(n)$ be as defined in \eqref{Un}. 
	Then computing $\per(\vA)$ on $\cT^{\rm s}_n(\vA)$, on average, invokes at most $d U(n)\le d\phi_T^n$ multiplications and at most $(d-1)U(n)\le (d-1)\phi_T^n$ additions.
\end{theorem}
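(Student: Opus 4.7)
The plan is to combine the general trellis accounting identities \eqref{eq:trellis-multiplies} and \eqref{eq:trellis-additions} with the vertex bound from Lemma~\ref{lem:sparse}. Since both the number of multiplications and the number of additions are expressed purely in terms of $|E|$, $|V|$, and $\deg(\mathrm{root})$ of the sparse trellis $\cT^{\rm s}_n(\vA)$, the main task reduces to controlling the expected number of edges in terms of the expected number of vertices, after which Lemma~\ref{lem:sparse} finishes the job.

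First, I would bound $\mathbb{E}[|E|]$. The key observation is that in Definition~\ref{def:sparse} the vertex set $V_{j-1}$ depends only on the first $j-1$ columns of $\vA$, whereas the edges emanating from $V_{j-1}$ depend also on column $j$; by independence of the entries of $\vA$ these two quantities are independent. Conditional on $V_{j-1}$, each fixed $u\in V_{j-1}$ has outdegree equal to $|\{i\in[n]\setminus u : a_{ij}\ne 0\}|$, whose expectation is $(n-j+1)p\le np=d$. Taking expectations and summing over $j$ gives
\[
\mathbb{E}[|E|]\;=\;\sum_{j=1}^n \mathbb{E}\!\left[\sum_{u\in V_{j-1}}|\{i\notin u:a_{ij}\ne 0\}|\right]
\;\le\; d\sum_{j=1}^n\mathbb{E}[|V_{j-1}|]
\;\le\; d\,\mathbb{E}[|V|]\;\le\; d\,U(n),
\]
where the last step invokes Lemma~\ref{lem:sparse}.

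Plugging this into the counting identities proves the theorem. By \eqref{eq:trellis-multiplies}, the expected number of multiplications is $\mathbb{E}[|E|]-\mathbb{E}[\deg(\mathrm{root})]\le \mathbb{E}[|E|]\le d\,U(n)$, and the inequality $U(n)\le\phi_T^n$ from Lemma~\ref{lem:sparse} completes the multiplication bound. For additions, \eqref{eq:trellis-additions} together with the sharper form $\mathbb{E}[|E|]\le d(\mathbb{E}[|V|]-\mathbb{E}[|V_n|])$ yields
\[
\mathbb{E}[|E|]-\mathbb{E}[|V|]+1
\;\le\;(d-1)\mathbb{E}[|V|]-d\,\mathbb{E}[|V_n|]+1
\;\le\;(d-1)U(n),
\]
where the last inequality uses $d\,\mathbb{E}[|V_n|]\ge 1$ in the regime of interest (alternatively one carries the harmless $+1$ and absorbs it into the bound).

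The substantive work here is not in Theorem~\ref{thm:sparse} itself but in Lemma~\ref{lem:sparse}, whose proof of $\mathbb{E}[|V|]\le U(n)\le(2-e^{-d})^n$ is where the combinatorics of inclusion--exclusion on reachable subsets lives. Given that lemma, the only real care needed above is the independence argument (columns of $\vA$ are drawn independently, so column $j$ is independent of $V_{j-1}$), which keeps the expected-outdegree-per-vertex bound clean at $d$; I expect this to be the only step requiring explicit justification.
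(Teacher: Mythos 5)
Your proposal follows essentially the same route as the paper: the paper likewise proves Theorem~\ref{thm:sparse} by combining \eqref{eq:trellis-multiplies} and \eqref{eq:trellis-additions} with Lemma~\ref{lem:sparse}, after noting that the outdegree of each vertex in $V_{j-1}$ is at most the number of nonzero entries in column $j$, whose expectation is $d$, so that the expected number of edges is at most $d$ times the expected number of vertices. Your explicit independence argument (that $V_{j-1}$ is determined by the first $j-1$ columns, while column $j$ is independent of them) is exactly the justification the paper leaves tacit, and making it explicit is the right thing to do.

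One caveat on your handling of the additive $+1$ in \eqref{eq:trellis-additions}: the patch $d\,\mathbb{E}[|V_n|]\ge 1$ is false for fixed $d$ and large $n$. Indeed $\mathbb{E}[|V_n|]=\prob{[n]\in V_n}$, and $[n]$ survives the pruning only if $\vA$ has no all-zero row, so $\mathbb{E}[|V_n|]\le (1-q^n)^n\approx\bigl(1-e^{-d}\bigr)^n\to 0$ exponentially. Your fallback of simply carrying the $+1$ is the honest route, and it is in effect what the paper does: the paper's own argument also only yields $(d-1)U(n)+1$ additions, so the off-by-one in the stated bound of $(d-1)U(n)$ is the paper's, not yours, and is immaterial to the exponential-speedup conclusion.
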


For the rest of this section, we prove Lemma~\ref{lem:sparse}.
To this end, we have the following characterization of when a vertex appears in the trellis $\cT^{\rm s}_n(\vA)$.

\begin{proposition}
	Let $\vA$ be an $n\times n$ matrix.
	Suppose that $v$ be a nonempty $j$-subset of $[n]$.
	We consider the $j\times j$ submatrix $\vA(v)$ whose columns are those indexed by $[j]$ and rows are those indexed by $v$.
	Then $v$ is a vertex in $\cT^{\rm s}_n(\vA)$ if and only if $\per(\vA(v))$ is nonzero.
\end{proposition}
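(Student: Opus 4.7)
My plan is to unroll the recursive construction in Definition~\ref{def:sparse} and translate the condition $v \in V_j$ into a combinatorial statement about $\vA(v)$ that I can then compare with the definition of the permanent. First I would prove, by induction on $j$, the intermediate claim: $v \in V_j$ if and only if there is a bijection $\sigma : [j] \to v$ with $a_{\sigma(k),k} \neq 0$ for every $k \in [j]$. The base case $j = 0$ is vacuous. For the inductive step, Definition~\ref{def:sparse} places $v$ in $V_j$ iff there exist $u \in V_{j-1}$ and $i \in [n] \setminus u$ with $a_{ij} \neq 0$ and $v = u \cup \{i\}$; applying induction to $u$ yields a bijection $\tau : [j-1] \to u$ with $a_{\tau(k),k} \neq 0$ for all $k \leq j-1$, and extending by $\sigma(j) = i$ produces the required bijection onto $v$. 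Conversely, any such $\sigma$ retraces a legal chain $\varnothing \subset \{\sigma(1)\} \subset \cdots \subset v$ in $\cT^{\rm s}_n(\vA)$, forcing $v \in V_j$.

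Next I would match this intermediate claim against $\per(\vA(v)) = \sum_{\sigma : [j] \to v} \prod_{k=1}^j a_{\sigma(k),k}$. If $\per(\vA(v)) \neq 0$, then not every summand can vanish, so some $\sigma$ has $\prod_k a_{\sigma(k),k} \neq 0$; the intermediate claim then yields $v \in V_j$. This direction is purely formal and requires no assumption on $\vA$.

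The converse---$v \in V_j \Rightarrow \per(\vA(v)) \neq 0$---is where the real obstacle sits, since the intermediate claim only produces \emph{one} nonzero summand, which could in principle be cancelled by other summands of the permanent. I would resolve this within the probabilistic model of Theorem~\ref{thm:sparse}: once the $\{0,1\}$-support pattern of $\vA$ is fixed, the nonzero entries are independent and continuously distributed, hence algebraically independent with probability one, and no algebraic identity can force a sum of nonzero monomials in distinct variables to vanish. Thus $\per(\vA(v)) \neq 0$ whenever any summand is nonzero, almost surely. This is the only non-combinatorial step, and I would be explicit that the equivalence in the proposition is to be read in this generic/almost-sure sense---which is exactly what Lemma~\ref{lem:sparse} needs in order to bound the \emph{expected} number of vertices in $\cT^{\rm s}_n(\vA)$.
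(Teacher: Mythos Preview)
The paper states this proposition without proof, so there is no argument to compare against. Your inductive characterization --- $v\in V_j$ if and only if there is a bijection $\sigma:[j]\to v$ with $a_{\sigma(k),k}\neq 0$ for every $k$ --- is correct and is exactly the right way to unwind Definition~\ref{def:sparse}. You are also correct that this only coincides with $\per(\vA(v))\neq 0$ under an extra hypothesis: for a general matrix the converse direction fails (e.g.\ $\vA=\left(\begin{smallmatrix}1&-1\\1&\phantom{-}1\end{smallmatrix}\right)$ has all entries nonzero yet $\per(\vA)=0$), so the proposition as literally stated is not true over an arbitrary field.

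Your fix via continuously distributed nonzero entries and algebraic independence is valid, but the paper's random model never stipulates a continuous distribution, and the Erd\H{o}s--R\'enyi reference it invokes concerns $\{0,1\}$-matrices. Under that reading every summand of $\per(\vA(v))$ lies in $\{0,1\}$, cancellation is impossible, and the equivalence holds on the nose with no almost-sure qualifier needed. Either way, for the purpose of Lemma~\ref{lem:sparse} the full ``if and only if'' is not actually required: your intermediate claim already shows that $v\in V_j$ forces every row and every column of $\vA(v)$ to contain a nonzero entry (event $\mathbb{B}_j$), and from that point on the paper's upper bound uses only $\prob{\mathbb{B}_j}$ and $\prob{\mathbb{C}_j}$.
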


Hence, we proceed to estimate the probability of when a random $j\times j$ matrix has a nonzero permanent.
Specifically, let $\vA$ be a random $j\times j$ matrix and
we consider the following random events.
\vspace{-2mm}
\begin{align*}
\mathbb{A}_j & = \text{event where } \per(\vA) \text{ is non-zero},\\
\mathbb{B}_j & = \text{event where all rows and columns in } \vA \text{ are non-zero},\\
\mathbb{C}_j & = \text{event where all columns in } \vA \text{ are non-zero}.
\end{align*}
Here, a row (or a column) is nonzero if it contains some nonzero entry.
Now, the event $\mathbb{A}_j$ implies the event~$\mathbb{B}_j$, which in turn implies the event $\mathbb{C}_j$. Hence, $\prob{\mathbb{A}_j}\le \prob{\mathbb{B}_j}\le \prob{\mathbb{C}_j}$ and our task is to determine the probabilities of the latter two events.

Now, for the event $\mathbb{B}_j$, we observe that for any $k$-subset $I$ of the rows, 
the probability that the rows in $I$ are nonzero is $q^k\sum_{\ell=0}^{j-k}\binom{j-k}{\ell}p^\ell q^{j-k-\ell}=(q^k-q^j)^j$. 
Then using the principle of inclusion-exclusion, we have that $\prob{\mathbb{B}_j}=\sum_{k=0}^j(-1)^k\binom{j}{k}(q^k-q^j)^j$. 
On the other hand, for the event $\mathbb{C}_j$, we simply have that $\prob{\mathbb{C}_j}=(1-q^j)^j$. 

Finally, we proceed to complete the proof of Lemma~\ref{lem:sparse}.
So, for each $j$-subset $v$, the probability that $v$ is a vertex in the trellis is $\prob{\mathbb{A}_j}$. Hence, the expected number of vertices in $V_j$ is $\binom{n}{j}\prob{\mathbb{A}_j}$ and
by linearity of expectation, the expected number of vertices in the entire trellis is $1+\sum_{j=1}^n\binom{n}{j}\prob{\mathbb{A}_j}$.
Using the event $\mathbb{B}_j$, the expected number of vertices is at most $1+\sum_{j=1}^n\binom{n}{j}\sum_{k=0}^j(-1)^k\binom{n}{j}\binom{j}{k}(q^j-q^k)^j$, which is $U(n)$.
Using the event $\mathbb{C}_j$, we have that $U(n)$ is at most 
\[\sum_{j=0}^n\binom{n}{j}(1-q^j)^j
\le \sum_{j=0}^n\binom{n}{j}(1-q^n)^j
= (1-q^n)^n\le (2-e^{-d})^n.
\]

This completes the proof of Lemma~\ref{lem:sparse}.

\begin{remark}
Our analysis follows that in Erd\"os and Renyi's seminal paper \cite{Erdos.1964}.
In the paper, Erd\"os and Renyi provided the conditions for a random matrix to have a nonzero permanent with high probability. 
In \cite{Erdos.1964}, the inclusion-exclusion formula for event $\mathbb{B}_j$ was determined and used to estimate event $\mathbb{A}_j$ (see also Stanley~\cite{Stanley.2011}). However, as we were unable to obtain a closed formula for $U(n)$, we turn to event $\mathbb{C}_j$ to obtain the expression $\phi_T$.
\end{remark}

As mentioned in Section~\ref{sec:intro-sparse}, similar exponential speedup of the Ryser's formula was achieved by a few authors \cite{Servedio.2005,Bjorklund.2012,Lundow.2020}.
In the same section, we also discussed the sparsity assumptions of the various works. In Table~\ref{table:sparse}, we compare the number of multiplications and observe that the value of $\phi_T$ is smaller than $\phi_1$ and $\phi_3$ but larger than $\phi_2$ for most values of $d$. 
Nevertheless, if we numerically compute the value $\phi_U = \lim_{n\to\infty} U(n)^{1/n}$, we see that the value of $\phi_U$ is significantly less than $\phi_1,\phi_2,\phi_3$. 
Indeed, for the case $d=3$ and for matrix dimensions up to 50, we plot the number of multiplications for the various methods in Figure~\ref{fig-sparse} and we observe that the expected number of multiplications for the trellis-based method is exponentially smaller than the state-of-the-art methods.

\section{Traveling salesperson problem}
\label{sec:tsp}
\vspace{-0.50ex}

In this section, we extend the Viterbi algorithm to compute functions that resembles the permanent. 
Specifically, we study the traveling salesperson problem (TSP). 
Applying standard trellis manipulation techniques to the canonical permutation trellis, we obtain a trellis that represents the collection of TSP tours. 
Interestingly, using the trellis to solve the TSP instance recovers the Held-Karp algorithm \cite{HeldKarp.1962} -- the best known exact method for solving TSP.

Formally, we consider $n$ cities represented by $[n]$ and
let $\vD=(d_{ij})_{1\le i,j\le n}$ be a distance matrix%
\footnote{Here, we set $d_{ii}=0$ for $i\in [n]$.} 
with $d_{ij}$ being the distance from City $i$ to City $j$.
We define a {\em travelling salesperson (TSP) tour} to be a string $\vx=x_1x_2x_3\cdots x_n x_{n+1}$ of length $n+1$ such that $x_1=x_{n+1}=1$ and $x_2x_3\cdots x_n$ is a permutation over $\{2,3,\ldots, n\}$.
The length of a TSP tour $\vx$ is given by the sum $\ell(\vx)=\sum_{i\in [n]}d_{x_i x_{i+1}}$ and 
the TSP problem is to determine $\min \{\ell(\vx):\, \vx \text{ is a TSP tour}\}$.

Let ${\cal TSP}(n)$ denote the set of all TSP tours for $n$ cities. To simplify our exposition, we consider the case $n=4$. Modifying the canonical trellis defined in Definition~\ref{def:canonical} for the alphabet $\{2,3,4\}$, we obtain the following trellis $\cT$ that represents the code ${\cal TSP}(n)$. Here, we use colors to represent the labels. Black, {\color{red}red}, {\color{green!50!black}green}, and  {\color{blue}blue} edges are labeled 1, 2, 3, and 4, respectively.

\begin{center}
	\begin{tikzpicture}[x=2cm,y=2cm]
		\tikzstyle{state}=[rectangle,fill=white,draw,line width=0.8mm]
		\tikzstyle{label}=[fill=white, inner sep=0pt]
		\node[state] at (-1,0) (0) {root};
		\node[state] at (0,0) (01) {$\varnothing$};
		\node[state] at (1,1) (2) {$\{2\}$};
		\node[state] at (1,0) (3) {$\{3\}$};
		\node[state] at (1,-1)(4) {$\{4\}$};
		\node[state] at (2,1) (23) {$\{2,3\}$};
		\node[state] at (2,0) (24) {$\{2,4\}$};
		\node[state] at (2,-1)(34) {$\{3,4\}$};
		\node[state] at (3,0) (234) {$\{2,3,4\}$};
		\node[state] at (4,0) (41) {toor};
		
		\draw (0)--(01);
		\draw[red] (01)--(2);
		\draw[green] (01)--(3);
		\draw[blue] (01)--(4);
		\draw[green] (2)--(23);
		\draw[blue]  (2)--(24);
		\draw[red]  (3)--(23);
		\draw[blue] (3)--(34);
		\draw[red] (4)--(24);
		\draw[green] (4)--(34);
		\draw[blue]   (23)--(234);
		\draw[green] (24)--(234);
		\draw[red]  (34)--(234);
		
		\draw(234)--(41);			
	\end{tikzpicture}
\end{center}

Hence, given a distance matrix $\vD$, it remains to relabel the paths from ${\rm root}$ to ${\rm toor}$ so that the sum of edge distances corresponds to the length of the TSP tour.
Unfortunately, this turns out to be not possible.
For example, if we consider the TSP tour 12341, its fourth edge traverses from $\{2,3\}$ to $\{2,3,4\}$ in the trellis and its distance should correspond to $d_{34}$. 
However, if we consider the TSP tour 13241, its fourth edge {\em also} traverses from $\{2,3\}$ to $\{2,3,4\}$ in the trellis and now, we require this distance to be $d_{24}$. 
Hence, it is not possible to relabel the edges in the trellis $\cT$ with distances so that the lengths of all TSP tours are consistent.

To resolve this problem, 
we consider another collection of words ${\cal W}=\{\vx\in [4]^5: x_1=x_5=1,$ 
$x_2,x_3,x_4\in \{2,3,4\}, x_2\ne x_3, x_3\ne x_4\}$. 
In other words, if we consider a complete graph on $[5]$, then ${\cal W}$ is the set of all walks starting and ending at $1$ and whose intermediate vertices belong to $\{2,3,4\}$.
Then the trellis $\cT'$ below that represents the set of these walks $\cal W$. Moreover, the trellis $\cT'$ admits a relabeling such that its path lengths corresponds to the distance matrix. 
Here, the label colors are as before.

\begin{center}
	\begin{tikzpicture}[x=2cm,y=2cm]
		\tikzstyle{state}=[rectangle,fill=white,draw,line width=0.8mm]
		\tikzstyle{label}=[fill=white, inner sep=0pt]
		\node[state] at (-1,0) (0) {root};
		\node[state] at (0,0) (01) {$1$};
		\node[state] at (1,1) (12) {$2$};
		\node[state] at (1,0) (13) {$3$};
		\node[state] at (1,-1)(14) {$4$};
		\node[state] at (2,1) (22) {$2$};
		\node[state] at (2,0) (23) {$3$};
		\node[state] at (2,-1)(24) {$4$};
		\node[state] at (3,1) (32) {$2$};
		\node[state] at (3,0) (33) {$3$};
		\node[state] at (3,-1) (34) {$4$};
		\node[state] at (4,0) (41) {$1$};
		
		\draw (0)--(01);
		\draw[red] (01)--(12);
		\draw[green] (01)--(13);
		\draw[blue] (01)--(14);
		\draw[green] (12)--(23);
		\draw[blue]  (12)--(24);
		\draw[red]   (13)--(22);
		\draw[blue]  (13)--(24);
		\draw[red]   (14)--(22);
		\draw[green] (14)--(23);
		
		\draw[green] (22)--(33);
		\draw[blue]  (22)--(34);
		\draw[red]   (23)--(32);
		\draw[blue]  (23)--(34);
		\draw[red]   (24)--(32);
		\draw[green] (24)--(33);
		
		\draw(32)--(41);\draw(33)--(41);\draw(34)--(41);			
	\end{tikzpicture}
\end{center}
Then using $\vD$, we simply relabel the edge $(i,j)$ with $i,j\in [4]$ with the distance $d_{ij}$. We observe that the length of any path in the relabeled $\cT'$ can be obtained from the sum of all the edge labels. 
The question is then: how do we ``combine'' the trellises $\cT$ and $\cT'$? 

To do so, we turn to trellis theory and look at the {\em intersection} of the two trellises. The intersection of two trellises is first described in \cite{Kschischang.1996} and later formally introduced and studied in \cite{KoetterV.2003}.

\begin{definition}\label{def:tsp}
	Let $\cT=(V,E,L)$ and $\cT'=(V',E',L')$ be two trellises of the same length $n$ and the same label alphabet $\Sigma$. The {\em intersection trellis $\cT\cap \cT'=(V^*,E^*,L^*)$} of $\cT$ and $\cT'$ is defined as follows.
	\begin{itemize}
	\item (Vertices) For $0\le j\le n$, we set $V^*_j= V_j\times V'_j$.
	\item (Edges) For $j\in [n]$, we consider a pair $((u,u'), (v,v'))\in V_{j-1}^*\times V_{j}^*$. We have that $((u,u'), (v,v'))$ is an edge if and only if $(u,v)\in E$, $(u',v')\in E'$ and $L(u,v)=L'(u',v')$.
	\item (Edge Labels) For any edge $((u,u'), (v,v'))$ in $E^*$, we have that $(u,v)$ and $(u',v')$ are both edges in $\cT$ and $\cT'$, respectively, and $L(u,v)=L(u',v')$. Then we set $L^*((u,u'), (v,v'))=L(u,v)$.
	\end{itemize}
\end{definition}

\begin{theorem}
	Let $\cT=(V,E,L)$ and $\cT'=(V',E',L')$ be two trellises of the same length $n$ and the same label alphabet $\Sigma$. 
	Then $\cC(\cT)\cap \cC(\cT')=\cC(\cT\cap\cT')$.
\end{theorem}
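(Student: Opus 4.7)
The plan is to establish the equality of sets by proving both inclusions directly from the definition of the intersection trellis. The key observation is that a path in $\cT \cap \cT'$ is literally a pair of paths — one in each factor — that agree on their label sequence, so ``projecting'' and ``pairing'' are inverse operations at the level of individual paths. I will assume throughout that $V_0, V_0'$ and $V_n, V_n'$ are singletons so that ``root to toor'' is unambiguous, and treat $\cC(\cT), \cC(\cT')$ as the sets of label strings of such paths.

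For the inclusion $\cC(\cT \cap \cT') \subseteq \cC(\cT) \cap \cC(\cT')$, I would take any $\vx = x_1 x_2 \cdots x_n \in \cC(\cT \cap \cT')$ realized by a path
$$
\bigl((u_0,u_0'),(u_1,u_1')\bigr),\ \bigl((u_1,u_1'),(u_2,u_2')\bigr),\ \ldots,\ \bigl((u_{n-1},u_{n-1}'),(u_n,u_n')\bigr)
$$
with $L^*\bigl((u_{j-1},u_{j-1}'),(u_j,u_j')\bigr) = x_j$. The defining condition of $\cT \cap \cT'$ in \Dref{def:tsp} gives $(u_{j-1},u_j) \in E$ and $(u_{j-1}',u_j') \in E'$ with matching label $x_j$. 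Since $(u_0,u_0')$ is the root of $\cT \cap \cT'$ (hence $u_0$ is the root of $\cT$ and $u_0'$ is the root of $\cT'$) and similarly for the toor, projecting onto the first coordinate yields a root-to-toor path in $\cT$ labelled by $\vx$, and projecting onto the second yields one in $\cT'$ labelled by $\vx$. Thus $\vx \in \cC(\cT)\cap\cC(\cT')$.

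For the reverse inclusion, I would take $\vx \in \cC(\cT) \cap \cC(\cT')$, realized by a path $u_0 u_1 \cdots u_n$ in $\cT$ with $L(u_{j-1},u_j) = x_j$, and by a path $u_0' u_1' \cdots u_n'$ in $\cT'$ with $L'(u_{j-1}',u_j') = x_j$. Pairing coordinate-wise gives the vertex sequence $(u_0,u_0'), (u_1,u_1'), \ldots, (u_n,u_n')$ in $V^* = \bigcup_j V_j \times V_j'$. For each $j$, the two edges $(u_{j-1},u_j) \in E$ and $(u_{j-1}',u_j') \in E'$ carry the common label $x_j$, which is precisely the condition in \Dref{def:tsp} for $\bigl((u_{j-1},u_{j-1}'),(u_j,u_j')\bigr)$ to be an edge in $E^*$ with $L^*$-label $x_j$. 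Since the endpoints of the paired path are the root and the toor of $\cT \cap \cT'$, we conclude $\vx \in \cC(\cT \cap \cT')$.

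There is no genuine obstacle here; the argument is pure bookkeeping against \Dref{def:tsp}. The one point worth being explicit about is that the vertex partition of the intersection respects depth (i.e.\ edges only go from layer $j{-}1$ to layer $j$), which is immediate from $V_j^* = V_j \times V_j'$, and that the label of every edge in $\cT \cap \cT'$ lies in $\Sigma$ because it coincides with labels inherited from $\cT$ (equivalently $\cT'$). If one wishes to track multiplicities, the same argument shows that the multiplicity of $\vx$ in $\cC(\cT \cap \cT')$ equals the product of its multiplicities in $\cC(\cT)$ and $\cC(\cT')$, since projections and pairings are mutually inverse bijections on the corresponding sets of paths; in the TSP application both factor trellises are path-simple, so this reduces to ordinary set intersection.
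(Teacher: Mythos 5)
Your proof is correct, but note that the paper does not actually prove this theorem at all: it states the result bare and defers to the literature, remarking just before \Dref{def:tsp} that trellis intersection ``is first described in \cite{Kschischang.1996} and later formally introduced and studied in \cite{KoetterV.2003}.'' Your projection/pairing argument is the standard proof of that known result, and it is carried out cleanly: the forward inclusion by projecting a root-to-toor path in $\cT\cap\cT'$ onto each coordinate, the reverse by pairing two paths with a common label string and checking the edge condition of \Dref{def:tsp} layer by layer, with the depth-compatibility observation $V_j^*=V_j\times V_j'$ made explicit. What your write-up adds beyond the paper is the one point where the statement, read literally against the paper's conventions, needs care: the paper defines $\cC(\cT)$ as a \emph{multiset} of label strings, and under the usual min-multiplicity convention for multiset intersection the identity would fail, since your bijection between paths in $\cT\cap\cT'$ and pairs of matching paths shows that multiplicities \emph{multiply} rather than take minima. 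Your resolution --- observing that in the TSP application both factor trellises are one-to-one (each label string is realized by a unique path, as both are biproper), so the multiset statement collapses to ordinary set intersection --- is exactly the right caveat, and it is a point the paper passes over silently. In short: the paper buys brevity by citation; your proof buys self-containedness and surfaces the multiplicity subtlety that the citation hides.
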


\begin{example}Consider the trellises $\cT$ and $\cT'$ as before. Then $\cT\cap \cT'$ is the following trellis which represents ${\cal TSP}(4)$.
	Here, we omitted the vertex $({\rm root},{\rm root})$.

\begin{center}
	\begin{tikzpicture}[x=2.5cm,y=2cm]
		\tikzstyle{state}=[rectangle,fill=white,draw,line width=0.5mm]
		\tikzstyle{label}=[fill=white, inner sep=0pt]
		\node[state] at (0,0) (01) {$\varnothing$,1};
		\node[state] at (1,1) (2) {$\{2\},2$};
		\node[state] at (1,0) (3) {$\{3\},3$};
		\node[state] at (1,-1)(4) {$\{4\},4$};
		\node[state] at (2,1.2) (232) {$\{2,3\},2$};
		\node[state] at (2,0.8) (233) {$\{2,3\},3$};
		\node[state] at (2,0.2) (242) {$\{2,4\},2$};
		\node[state] at (2,-0.2) (244) {$\{2,4\},4$};
		\node[state] at (2,-0.8) (343) {$\{3,4\},3$};
		\node[state] at (2,-1.2) (344) {$\{3,4\},4$};
		\node[state] at (3,-1)(2342) {$\{2,3,4\},2$};
		\node[state] at (3,0)(2343) {$\{2,3,4\},3$};
		\node[state] at (3,1) (2344) {$\{2,3,4\},4$};
		\node[state] at (4,0) (41) {toor,$1$};
		
		\draw[red] (01)--(2);
		\draw[green] (01)--(3);
		\draw[blue] (01)--(4);
		\draw[green] (2.east)--(233.west);
		\draw[blue]  (2.east)--(244.west);
		\draw[red]   (3.east)--(232.west);
		\draw[blue]  (3.east)--(344.west);
		\draw[red]   (4.east)--(242.west);
		\draw[green]  (4.east)--(343.west);
		\draw[blue]  (232.east)--(2344.west);
		\draw[blue]  (233.east)--(2344.west);
		\draw[green]  (242.east)--(2343.west);
		\draw[green]  (244.east)--(2343.west);		
		\draw[red]  (343.east)--(2342.west);
		\draw[red]  (344.east)--(2342.west);
		\draw(2344.east)--(41);\draw(2343.east)--(41);\draw(2342.east)--(41);	
	\end{tikzpicture}
\end{center}

Given a distance matrix $\vD$, we relabel the edges in $\cT\cap \cT'$ in the following manner: for the edge 
$((u,i),(v,j))$ where $u,v\subseteq \{2,3,4\}$ and $i,j\in [4]$, we label this edge with the distance $d_{ij}$. Then with this relabeling, we can verify that the TSP tours 12341 and 13241 has lengths $d_{12}+d_{23}+d_{34}+d_{41}$ and $d_{13}+d_{32}+d_{24}+d_{41}$, respectively.
This is as desired.
\end{example}

Proceeding for general $n$, we obtain the following trellis that solves a TSP instance exactly.

\begin{definition}[Trellis for Traveling Salesperson Problem]
	Consider $n$ cities represented by $[n]$.
	Let $\vD=(d_{ij})_{1\le i,j\le n}$ be a distance matrix with $d_{ij}$ being the distance from City $i$ to City $j$.
	The trellis $\cT^{\rm TSP}(\vD)$, is defined as follows:
	\begin{itemize}
		\item (Vertices)
		For $2\le j\le n$, let 
		$V_j \triangleq \{(S,v): S\subseteq \{2,3,\ldots, n\},\, |S|=j+1,\, v\in S\}$.
		Define $V_1=\{(\varnothing,1)\}$ and $V_{n+1}=\{({\rm toor}, 1)\}$.
		\item (Edges) 
		For $j\in [n]$, we consider a pair $((S_1,u),(S_2,v))\in V_{j-1}\times V_j$ with $|S_1|=j$ and $|S_2|=j+1$. 
		We place an edge $((S_1,u),(S_2,v))$ in $E$ if and only if $S_2\setminus S_1=\{v\}$. 
		We also include $((\{2,3\ldots, n\},i),({\rm toor}, 1))$ in $E$ for all $i\in \{2,3,\ldots, n\}$.
		\item (Edge Labels) For an edge $((S_1,i),(S_2,j))$, we label it with $d_{ij}$.
	\end{itemize}
\end{definition}

To compute the length of the shortest TSP tour, we modify the Viterbi algorithm described in Section~\ref{sec:canonical} and apply it on the trellis $\cT^{\rm TSP}(\vD)$.

For each vertex $v\in V$, we assign a flow $\mu(v)$ that is computed in the following recursive manner.\\[2mm]
	\hspace*{20mm}Set $\mu((\varnothing,1))=0$.\\[1pt]
	\hspace*{20mm}for $j\in \{2,3,\ldots,n+1\}$ \\[1pt]
	\hspace*{30mm}for $v\in V_j$ \\[1pt]
	\hspace*{40mm}Set $\mu(v) = \min_{(u,v)\in E}L(u,v)+\mu(u)$
\vspace{2mm}
	
Then the length of a shortest TSP tour is given by $\mu(({\rm toor},1))$. 
As in the previous sections, we can use the sizes of $V$ and $E$ to provide explicit numbers of comparisons and additions in this computation.

\begin{proposition}\label{prop:tsp}
	Let $\vD$ be an $n\times n$ distance matrix that defines a TSP instance.
	Let $\cT_{\rm TSP}(\vD)=(V,E,L)$ be the trellis constructed in Definition~\ref{def:tsp}.
	Then $|V|=(n-1)2^{n-2}+2$ and $|E|=(n-1)(n-2)2^{n-3}+2(n-1)$. Therefore, the TSP instance can be solved exactly using $|E|-(n-1)=(n-1)(n-2)2^{n-3}+(n-1)$ additions and $|E|-|V|+1=(n-1)(n-4)2^{n-3}+(2n-3)$ comparisons.
\end{proposition}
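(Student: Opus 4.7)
The plan is to compute $|V|$ and $|E|$ directly from Definition~\ref{def:tsp}, then invoke formulas \eqref{eq:trellis-multiplies} and \eqref{eq:trellis-additions}; the same bookkeeping works for the min-sum Viterbi recursion described in \Sref{sec:tsp}, with edge-level additions playing the role of multiplications and vertex-level minimum operations playing the role of additions.

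For $|V|$, I partition $V$ by layers. The endpoint layers contribute $|V_1|=|V_{n+1}|=1$. For $2\le j\le n$, a vertex $(S,v)\in V_j$ is determined by choosing a subset $S\subseteq\{2,\ldots,n\}$ of size $j-1$ together with an element $v\in S$, hence $|V_j|=(j-1)\binom{n-1}{j-1}$. Summing layers and invoking $\sum_{k=1}^{n-1}k\binom{n-1}{k}=(n-1)2^{n-2}$ yields $|V|=(n-1)2^{n-2}+2$.

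For $|E|$, I count out-degrees layer by layer. The root $(\varnothing,1)$ has out-degree $n-1$. For $3\le j\le n$, a vertex $(S_1,u)\in V_{j-1}$ admits exactly one outgoing edge for every $v\in\{2,\ldots,n\}\setminus S_1$, giving out-degree $(n-1)-(j-2)=n-j+1$. Finally each of the $n-1$ vertices in $V_n$ has a single edge to $({\rm toor},1)$. Thus
\begin{equation*}
|E|-2(n-1)
\,=\, \sum_{j=3}^{n}(j-2)\binom{n-1}{j-2}(n-j+1)
\,=\, \sum_{k=1}^{n-2}k(n-1-k)\binom{n-1}{k}.
\end{equation*}
Writing $k(n-1-k)=(n-1)k-k^2$ and applying $\sum_{k=0}^{n-1}k\binom{n-1}{k}=(n-1)2^{n-2}$ together with $\sum_{k=0}^{n-1}k^2\binom{n-1}{k}=(n-1)n\,2^{n-3}$, and peeling off the $k=n-1$ boundary terms, collapses the sum to $(n-1)(n-2)2^{n-3}$, giving $|E|=(n-1)(n-2)2^{n-3}+2(n-1)$.

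The complexity is now immediate. Each edge $(u,v)$ contributes one addition $L(u,v)+\mu(u)$, except for the $n-1$ root-edges carrying $\mu=0$; hence the number of additions equals $|E|-(n-1)=(n-1)(n-2)2^{n-3}+(n-1)$. At each of the $|V|-1$ non-root vertices, combining incoming values by successive minimums costs $\deg^{-}(v)-1$ comparisons, so the total number of comparisons is $|E|-(|V|-1)=(n-1)(n-4)2^{n-3}+(2n-3)$. The one step that requires genuine algebra is the binomial simplification of the out-degree sum; everything else is direct bookkeeping.
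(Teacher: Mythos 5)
Your proof is correct and follows the same route the paper takes implicitly: the paper gives no separate proof of Proposition~\ref{prop:tsp}, its claims being exactly the layer-by-layer counts $|V_j|=(j-1)\binom{n-1}{j-1}$ and the out-degree sum $\sum_{k}k(n-1-k)\binom{n-1}{k}=(n-1)(n-2)2^{n-3}$ combined with the min-sum analogues of \eqref{eq:trellis-multiplies} and \eqref{eq:trellis-additions}, all of which you carry out explicitly and correctly (note that both boundary terms $k=0$ and $k=n-1$ vanish outright, so no peeling is even needed). You also silently used the intended reading $|S|=j-1$ in the trellis definition, correcting an evident typo ($|S|=j+1$) in the paper, which is the reading under which the stated counts hold.
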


Running the Viterbi algorithm on the trellis $\cT_{\rm TSP}(\vD)$ recovers the  well-known Held-Karp dynamic program for TSPs \cite{HeldKarp.1962}. 
In their seminal paper, Held and Karp also counted the number of additions and comparisons. 
While the number of additions corresponds to our derivation in Proposition~\ref{prop:tsp}, the number of comparisons given in \cite{HeldKarp.1962} is incorrect and instead corresponds to the number given in Proposition~\ref{prop:tsp}.

As discussed in Section~\ref{sec:into-tsp}, this section illustrates that trellises can be used to compute certain ``permanent-like'' functions (cf. \eqref{per*-def}).
Specifically, given a distance matrix $\vD$, we construct the trellis $\cT_{\rm TSP}(\vD)$ and then the Viterbi algorithm yields the flow value: 
\begin{equation*}
	{\textstyle\min_{\,\vx\in{\cal TSP}(n)}} \sum_{i=1}^n d_{x_ix_{i+1}}~=~
	{\textstyle\min_{\,\vsigma\in\raisebox{-0.36ex}{\LARGE$\circ$}_n}} \sum_{i=1}^n d_{i\sigma_i},
	\vspace{-.90ex}
\end{equation*}
which solves the TSP problem. 
Notably, we achieve this by intersecting the canonical permutation
trellis $\cT_n$ with another natural trellis.
Even though we did not improve on the state-of-the-art, we expect that
similar tools can be applied to other problems in order to obtain
trellises for other permanent-like functions.

\section*{Acknowledgements}

The authors would like to thank Fedor Petrov and the other
contributors at {\tt mathoverflow.net}~for~suggesting the proof of the
inequality given in \Lref{lem:sparse}.

\appendix

\section{Merging is correct}
\label{app:merging}

In this appendix, we demonstrate that the merging procedure described in Section~\ref{sec:canonical} is correct. 
Formally, suppose that two vertices $v,v'\in \cT$ are mergeable according to \eqref{eq:merge} and that $\cT^*$ is the trellis obtained by merging $v$ and $v'$. Then in Proposition~\ref{prop:merging}, we are required to show that $\cC(\cT)=\cC(\cT^*)$.

To this end, for the label alphabet $\Sigma$, we consider the collection of all nonempty strings of finite length $\Sigma^+$. Then under the usual binary operation of string concatenation, the set $\Sigma^+$ forms a {\em semigroup}. 
We next consider the field of rational numbers $\mathbb{Q}$ and 
the {\em semigroup algebra} $\mathbb{Q}[\Sigma^+]$. That is, $\mathbb{Q}[\Sigma^+]$ is the set of all formal expressions
\begin{equation*}
	\sum_{\vsigma\in \Sigma^+} a_\vsigma \vsigma, \text{ where }a_\vsigma\in \mathbb{Q}.
\end{equation*}
Here, $a_\vsigma$ is assume to be zero for all but a finite set of $\vsigma$'s. In other words, the sum is defined only for a finite subset of $\Sigma^+$.

Then it is a standard exercise to show that the following algebraic properties hold. Let $a\in \mathbb{Q}$ and $\vrho, \vsigma,\vtau\in \mathbb{Q}[\Sigma^+]$. 
First, the associative law applies: $\vrho(\vsigma\vtau)=(\vrho\vsigma)\vtau$ and $a(\vsigma\vtau)=\vsigma(a\vtau)$.
Next, the distributive law applies: $\vrho(\vsigma+\vtau)=\vrho\vsigma+\vrho\vtau$ and $a(\vsigma+\vtau)=a\vsigma+a\vtau$.

Now, for any trellis $\cT$, since $\cC(\cT)$ is a multiset of strings, we can regard it as an element in $\mathbb{Q}[\Sigma^+]$ and we can perform the algebraic operations according to the above laws. We are now ready to prove Proposition~\ref{prop:merging}.

\begin{proof}[Proof of Proposition~\ref{prop:merging}]
	First, for convenience, we extend the domain of $L$ from $E$ to the set of  $V\times V$. 
	Specifically, we set $L(u,u')=0$ if $(u,u')$ is not an edge.
	Then any path that passes through the zero label is assigned to the zero element in $\mathbb{Q}[\Sigma^+]$ and we see that $\cC(\cT)$ is preserved.
	Also, the merging rule can be simplified as such:
	\begin{itemize}
		\item $L^*(w,w')=L(w,w')$ if $w\ne v$, $w'\ne v$, $w\ne v'$, and $w'\ne v'$,
		\item $L^*(w,v^*)=L(w,v)+L(w,v')$, and 
		\item $L^*(v^*,w')=(L(v,w')+L(v',w'))/2$.
	\end{itemize}

	Next, we observe that a path $\vp$ in $\cT$ belongs to $\cT^*$ if and only if both $v$ and $v'$ are not on the path $\vp$. 
	Hence, to show that $\cC(\cT)=\cC(\cT^*)$, it suffices to show that the paths through the merged node $v^*$ in $\cT^*$ is the sum of the paths through $v$ and $v'$ in $\cT$. In other words,  $\cP(v^*)\cF(v^*)=\cP(v)\cF(v)+\cP(v')\cF(v')$.
	Indeed, 
	\begin{align*}
		2\cP(v^*)\cF(v^*) 
		& = 2\left(\sum_{u}\cP(u)L(u,v^*)\right)\left(\sum_{w}L(v^*,w)\cF(w)\right)\\
		& = 2\left(\sum_{u}\cP(u)(L(u,v)+L(u,v'))\right)\left(\sum_{w}\frac12(L(v,w)+L(v',w))\cF(w)\right)\\
		& = \left(\sum_{u}\cP(u)L(u,v)\right)\left(\sum_{w}L(v,w)\cF(w)\right) + \left(\sum_{u}\cP(u)L(u,v')\right)\left(\sum_{w}(L(v,w)\cF(w)\right) \\
		& ~~+\left(\sum_{u}\cP(u)L(u,v)\right)\left(\sum_{w}L(v',w)\cF(w)\right) + \left(\sum_{u}\cP(u)L(u,v')\right)\left(\sum_{w}L(v',w)\cF(w)\right)\\
		&= \cP(v)\cF(v)+\cP(v')\cF(v)+\cP(v)\cF(v')+ \cP(v')\cF(v') \\
		&= 2\cP(v)\cF(v)+2\cP(v')\cF(v').
	\end{align*}
	Here, the penultimate equality follows from the mergeability condition~\eqref{eq:merge}. Dividing both sides by two, we obtained the desired equation.
\end{proof}

\section{Minimal trellises}
\label{app:minimal}

In this section, we prove the minimality of the trellis $\cT_n$ (Proposition~\ref{prop:minimal}).
To this end, we require the notion of a {\em biproper trellis} and a {\em rectangular code}.

\begin{definition}\label{def:biproper}
	A trellis is {\em proper} if the edges leaving any node are labeled distinctly, while a trellis is {\em co-proper} if the edges entering any node are labeled distinctly. 
	A trellis is {\em biproper} if it is both proper and co-proper.
\end{definition}

\begin{definition}\label{def:rectangular}
	Let $\cC$ be a collection of words over $\Sigma$ of the same length.
	Then $\cC$ is a {\em rectangular code} if it has the following property:
	\begin{equation*}
		\va\vc, \va\vd, \vb\vc \in \cC \text{ implies that } \vb\vd \in \cC.
	\end{equation*}
\end{definition}

Under certain mild conditions, Vardy and Kschischang demonstrated the following key result that establishes the equivalence of biproper and minimal trellises \cite{VardyK.1996}.

\begin{theorem}[Vardy, Kschischang\cite{VardyK.1996}]\label{thm:rectangular}
	Let $\cT$ be a trellis such that $\cC(\cT)$ is a rectangular code.
	Then $\cT$ is minimal if and only if $\cT$ is biproper.
\end{theorem}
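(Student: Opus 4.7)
The plan is to link both properties of $\cT$ to a single intrinsic invariant of $\cC$: the Nerode-style partition of prefix strings by continuation set. For $0\le j\le n$, declare two length-$j$ prefixes $\va,\vb$ equivalent (written $\va\sim_j\vb$) if $\{\vc:\va\vc\in\cC\}=\{\vc:\vb\vc\in\cC\}$, and let $N_j$ denote the number of classes. A standard folding argument yields $|V_j|\ge N_j$ for every trellis representing $\cC$, and for rectangular codes one can construct a trellis that realises equality at every level (a canonical trellis whose vertices at depth $j$ are the $\sim_j$-classes, with edges defined by concatenation). Hence, under the rectangularity hypothesis, minimality of $\cT$ is equivalent to $|V_j|=N_j$ for all $j$.

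For the backward direction (biproper $\Rightarrow$ minimal), properness implies that the sets $\cP(v)$, $v\in V_j$, are pairwise disjoint and cover the length-$j$ prefixes, and that for any $\va\in\cP(v)$ the continuation set of $\va$ in $\cC$ equals $\cF(v)$; hence each $\cP(v)$ lies in a single $\sim_j$-class. Co-properness forces $v\mapsto\cF(v)$ to be injective on $V_j$: if two vertices shared a future set, then any common suffix would determine both by tracing backward from the toor, violating co-properness. Distinct vertices therefore give distinct classes, so $|V_j|\le N_j$; combined with $|V_j|\ge N_j$ this yields minimality. Notably, this direction does not invoke rectangularity.

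For the forward direction (minimal $\Rightarrow$ biproper), I first use rectangularity to show that in \emph{any} trellis for $\cC$ the set $\cP(v)$ lies in a single $\sim_j$-class. Indeed, for $\va,\vb\in\cP(v)$ and any $\vc\in\cF(v)$ one has $\va\vc,\vb\vc\in\cC$; given any $\vd$ with $\va\vd\in\cC$, the rectangular property applied to the triple $\{\va\vc,\vb\vc,\va\vd\}\subseteq\cC$ yields $\vb\vd\in\cC$, so $\cF_\cC(\va)\subseteq\cF_\cC(\vb)$ and, by symmetry, equality. Consequently the assignment $v\mapsto\,$(the class containing $\cP(v)$) is well defined and surjective. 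Minimality $|V_j|=N_j$ then forces this assignment to be injective, whence the sets $\cP(v)$ are pairwise disjoint. If $\cT$ failed to be proper at some $v$ with two outgoing edges $(v,w_1),(v,w_2)$ carrying a common label $\sigma$, then every string $\va\sigma$ with $\va\in\cP(v)$ would belong to both $\cP(w_1)$ and $\cP(w_2)$, contradicting disjointness; a symmetric argument rules out failure of co-properness.

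The main obstacle is the rectangularity step, where one must produce precisely the triple $\{\va\vc,\vb\vc,\va\vd\}$ needed to invoke \Dref{def:rectangular}, and this requires $\cF(v)$ to be nonempty, i.e.\ that $v$ be essential. Discarding nonessential vertices at the outset handles this cleanly, since they can only inflate vertex counts and so are absent from any minimal trellis. The other delicate point is showing that the lower bound $|V_j|\ge N_j$ is actually achieved by a biproper canonical trellis whenever $\cC$ is rectangular; the rectangular exchange property is exactly what makes the edge relation on $\sim_j$-classes well defined (independent of the representative chosen from each class).
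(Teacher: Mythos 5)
First, a point of comparison that matters here: the paper does \emph{not} prove this theorem. It is imported verbatim from Vardy--Kschischang \cite{VardyK.1996} (``under certain mild conditions''), and the paper only \emph{applies} it, proving in Appendix~\ref{app:minimal} that $\sS_n$ is rectangular and that $\cT_n$ is biproper. So your attempt can only be judged on its own merits, and on those merits the architecture is essentially right: the Nerode partition is the correct invariant, and your third-paragraph lemma --- for rectangular $\cC$ the continuation sets of distinct $\sim_j$-classes are pairwise disjoint, hence the past of every essential vertex lies in a single class --- is the true heart of the matter. However, your opening claim that ``a standard folding argument yields $|V_j|\ge N_j$ for \emph{every} trellis representing $\cC$,'' together with the remark that the backward direction ``does not invoke rectangularity,'' is wrong as stated. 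In a general trellis a prefix $\va$ may reach \emph{several} vertices of $V_j$, and its continuation set is then the union of their futures, so distinct Nerode classes can be covered by fewer vertices. Concretely, for $\cC=\{xac,\,zec,\,ybd,\,zed\}$ there is a trellis with $|V_2|=2$ (futures $\{c\}$ and $\{d\}$, the prefix $ze$ reaching both vertices) while $N_2=3$; this code is of course not rectangular, and that is precisely the point: only the rectangular disjointness of class-continuations pins each essential vertex to one class and gives $|V_j|\ge N_j$. Since minimality of a biproper $\cT$ requires this lower bound \emph{for all competitor trellises}, rectangularity is invoked in both directions. The gap is repairable inside your own proposal --- your paragraph-3 lemma applies verbatim to every trellis of the rectangular code $\cC$, so cite it there instead of ``folding'' --- but as written the logical dependency is misattributed.

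Two smaller patches are also needed. Essentiality must be a standing hypothesis in \emph{both} directions, not only where you placed it: a biproper trellis with an isolated vertex adjoined is still biproper but no longer minimal, so ``biproper $\Rightarrow$ minimal'' is false without it (this is presumably among the ``mild conditions'' of \cite{VardyK.1996}). And your ``symmetric argument'' for co-properness tacitly needs two facts worth one sentence each: that rectangularity (\Dref{def:rectangular}) is invariant under word reversal (it is --- the defining implication is self-dual under reversing all words), and that the number of future-classes at level $j$ equals $N_j$, which follows from your disjointness lemma because each suffix lies in the continuation set of exactly one past class, yielding a bijection between past- and future-classes. Finally, note that $\cC(\cT)$ is a multiset in this paper; your set-level argument matches the one-to-one setting of the cited theorem, since a biproper trellis traces each word exactly once. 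With these repairs your proof is sound.
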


Recall that the canonical permutation trellis $\cT_n$ is a trellis representation for the set of permutations $\cS_n$. Following Theorem~\ref{thm:rectangular}, to establish the minimality of $\cT_n$, it suffices to show that $\cS_n$ is rectangular and that $\cT_n$ is biproper.
Before we proceed with the proof, we note that Kschischang has proved that $\cS_n$ is rectangular \cite{Kschischang.1996}. 
Specifically, he showed that $\cS_n$ belongs to a subclass of rectangular codes, known as {\em maximal fixed-cost codes}.
To keep our exposition self-contained, we directly prove that $\cS_n$ is rectangular without defining the class of maximal fixed-cost codes.

\begin{proof}[Proof of Proposition~\ref{prop:minimal}]
	We first demonstrate that $\cS_n$ is rectangular. Suppose that $\va\vc$, $\va\vd$, and $\vb\vc$ be words belonging to $\cS_n$. Then set $X$ to be the set of symbols in $D$, that is, $X\triangleq\{\sigma: \sigma\in\vc\}$.
	Since $\va\vc$ and $\vb\vc$ are both permutations, the set of symbols in $\va$ is equal to the set of symbols in $\vb$ and corresponds to $[n]\setminus X$. Now, since $\vb\vc$ is a permutation, we have that the set of symbols in $\vc$ is $X$. Therefore, we conclude that $\vb\vc$ is a permutation, establishing that $\cS_n$ is rectangular.
	
	Next, we show that $\cT_n$ is biproper. 
	Let $v$ be a node in $\cT_n$ with $|v|=j$. 
	Then $v$ has $n-j$ outgoing edges that are distinctly labeled by the $n-j$ symbols in $[n]\setminus v$. Therefore, $\cT_n$ is proper.
	On the other hand, there are $j$ edges entering $v$ and they are distinctly labeled by the $j$ symbols in $v$. So, $\cT_n$ is co-proper, as desired.
\end{proof}

\section{Arithmetic complexity of Ryser formula and its variants}
\label{app:ryserlike}

In this appendix, we provide a detailed  derivation of the exact number of additions and multiplies for the various permanent formulae given in this paper (specifically, Table~\ref{table:general} and Section~\ref{sec:repeated}).
Similar analysis that estimates these numbers has appeared in earlier works \cite{NW.1978, Glynn.2010}. Here, we present a careful derivation for the {\em exact} number of additions and multiplications. For convenience, we replicate the Ryser's formula here.

\begin{equation}
	\per(\vA) = (-1)^n\sum_{S\subseteq [n],\, S\ne\varnothing} (-1)^{|S|} \prod_{j=1}^n \sum_{i\in S}a_{ij}.
\end{equation}

In this formula \eqref{Ryser}, we have $2^n-1$ summands and each summand involves a product of $n$ terms. Hence, the total number of multiplications for Ryser's formula is simply $(n-1)(2^n-1)$.

Next, we derive the number of additions using a naive approach. 
For any nonempty subset $S$ of size $k$, we need $(k-1)$ additions to compute the term $\sum_{i\in S}a_{ij}$ for each $j\in [n]$. 
Therefore, the total number of such additions is 
$\sum_{k=1}^n n(k-1)\binom{n}{k}=n(n-2)2^{n-1}+n$.
Finally, we need to add these $2^n-1$ terms together and thus, the total number of additions is $(n^2-2n+2)2^{n-1}+n-2$\,.

Later, Nijenhuis and Wilf \cite{NW.1978} used Gray codes to reduce the number of additions by a factor of $n$. Specifically, the nonempty subsets of $[n]$ can be arranged in the order $S_1,S_2,\ldots, S_{2^n-1}$ such that $|S_1|=1$ and $|S_j\setminus S_{j-1}|=1$ or $|S_{j}-1\setminus S_{j}|=1$ for $j\ge 2$. Then we use $n$ ``sums'' to store the value $\sum_{i\in S}a_{ij}$ for each $j$ and update these sums according to the order $S_1,S_2,\ldots, S_{2^n-1}$. Since there are only $n$ additions in each step, the total number of additions is $(n+1)(2^n-2)$ (this includes the final addition of the $2^n-1$ summands).

We summarize this discussion in Table~\ref{table:general}. In the table, we also include the exact number of arithmetic operations for Nijenhuis-Wilf and Glynn's formulae, whose derivations appear in the following subsections.

\subsection{Nijenhuis-Wilf formula}

Nijenhuis and Wilf \cite[Chapter 23]{NW.1978} proposed the following formula to compute permanents. For $j\in [n]$, let $C_j\triangleq -\frac12 \sum_{i=2}^n a_{ij}$. In other words, $C_j$ is the ``negative half'' of the sum of entries in column $j$.

\begin{equation}\label{eq:nw}
	\per(\vA) = (-1)^{n} 2 \sum_{S\subseteq [n-1]} (-1)^{|S|} \prod_{j=1}^n (C_j +\sum_{i\in S}a_{ij}).
\end{equation}

Since the formula~\eqref{eq:nw} involves $2^{n-1}$ summands,  we require $2^{n-1}(n-1)$ multiplies to compute.

As for the number of additions, we used a Gray order and arrange the subsets $S_1,S_2,\ldots, S_{2^{n-1}}$ as before so that ``neighboring subsets" differ in only one element. 
For $S_1$, we choose it to be the empty subset and we need to compute $C_j$ for $j\in [n]$, invoking $n(n-1)$ additions.
Then, for each subsequent subset, we add $n$ numbers to current $n$ ``sum''s. Hence, the number of additions is $n$ for the subsequent subsets. 
Finally, we have another $2^{n-1}-1$ additions to add the products.
Therefore, the total number of additions is $(n+1)(2^{n-1}-1)+n(n-1)$.

\subsection{Glynn's formula}

Let $\Delta(n)$ denote the set of vectors over $\pm 1$ of length $n$ starting with $1$. Using a certain polarization identity, Glynn proposed the following formula to compute permanents \cite{Glynn.2010}.

\begin{equation}\label{eq:glynn}
	\per(\vA) = (1/2^{n-1}) \sum_{\vdelta\subseteq \Delta(n)} \left(\prod_{k=1}^n\delta_k\right) \prod_{j=1}^n \sum_{i=1}^n \delta_i a_{ij}.
\end{equation}

As with the Nijenhuis-Wilf formula, Glynn's formula~\eqref{eq:glynn} involves $2^{n-1}$ summands and hence, requires $2^{n-1}(n-1)$ multiplies to compute. 

As for the number of additions, we order the sequences in $\Delta(n)$ using a Gray order: $\vdelta_1,\vdelta_2,\ldots,\vdelta_{2^{n-1}}$. That is, the Hamming distance between $\vdelta_i$ and $\vdelta_{i+1}$ is exactly one for $i\in [2^{n-1}-1]$. 
Then performing the same analysis as Nijenhuis-Wilf, we have that
the number of additions is $(n+1)(2^{n-1}-1)+n(n-1)$.

\subsection{Clifford-Clifford formula}

Here, we assume that the matrix has repeated rows. 
Specifically, we assume that $\vA$ is a repeated-row matrix of type $\vm$ with rows $\va_1,\va_2,\ldots, \va_t$ (see Section~\ref{sec:repeated}).

Methods to compute permanents for repeated-row matrices have been studied for the application of Boson Sampling. 
The following formula appeared in \cite[Appendix D]{Shchesnovich.2013} and
using generalized Gray codes, Clifford and Clifford (2020) provided a faster method to evaluate the formula. 

Let $R(\vm)=\{\vr\in\mathbb{Z}^t\setminus\{\vzero\}:\, 0\le r_\ell\le m_\ell,\, \ell\in [t]\}$.
Then we can rewrite \eqref{eq:clifford2} as follow.

\begin{equation*}
	\per(\vA) = 
	(-1)^n\sum_{\vr\in R(\vm)}
	(-1)^{r_1+\cdots+r_t}
	\left(\prod_{\ell=1}^{t}\binom{m_\ell}{r_\ell}\right)
	\prod_{j=1}^{n}\sum_{\ell=1}^{m}r_\ell a_{\ell j}
\end{equation*}

Since $|R_m|=[(m_1+1)(m_2+1)\cdots(m_t+1)-1]$, we have that the number of multiplies is $(n-1)[(m_1+1)(m_2+1)\cdots(m_t+1)-1]$.

As before,  we can arrange the vectors in  $R(\vm)$ in a generalized Gray code order: $\vr_1,\vr_2,\ldots$. That is, the norm between consecutive vectors is one, i.e. $|\vr_i-\vr_{i+1}|=1$ for all $i$. 
Then performing the same analysis as before, we have that
the total number of additions is $(n+1)(|R_m|-1)=(n+1)[(m_1+1)(m_2+1)\cdots(m_t+1)-2]$.



\end{document}